\newtheorem{mytheorem}{Theorem}
\crefname{mytheorem}{theorem}{theorems}
\newtheorem*{butheorem}{Borsuk-Ulam theorem \cite{borsuk1933drei}}
\newtheorem{mycor}{Corollary}
\newtheorem{myconstr}{Construction}
\newtheorem{mylemma}{Lemma}
\theoremstyle{definition}
\newtheorem{mydef}{Definition}
\Crefname{mydef}{Definition}{Definitions} 
\theoremstyle{plain}
\newcommand{\bra}[1]{\left<#1\right|}
\newcommand{\ket}[1]{\left|#1\right>}
\pgfplotsset{compat=1.17}
\definecolor{blue0}{RGB}{130, 190, 210}
\definecolor{blue1}{RGB}{70, 150, 190}
\definecolor{blue2}{RGB}{60, 100, 180}
\definecolor{blue3}{RGB}{50, 63, 170}
\definecolor{blue4}{RGB}{20, 30, 100}
\definecolor{blue5}{RGB}{0, 0, 75}
\tikzset{%
  dashes/.style args={#1}{%
    dash pattern=on #1 off #1
  }
}
\newcommand{\nocontentsline}[3]{}
\newcommand{\tocless}[2]{\bgroup\let\addcontentsline=\nocontentsline#1{#2}\egroup}
\begin{document}

\title{Topological obstructions to quantum computation with unitary oracles}
\author{Zuzana Gavorov\'a}
\affiliation{Rachel and Selim Benin
School of Computer Science and Engineering,
The Hebrew University of Jerusalem,
91904 Jerusalem, Israel}
\affiliation{F\'isica Teòrica: Informaci\'o i Fen\`omens Qu\`antics, Departament de F\'isica, Universitat Aut\`onoma de Barcelona, 08193 Bellaterra (Barcelona), Spain}
\author{Matan Seidel}
\affiliation{School of Mathematical Sciences, Tel Aviv University, 69978 Tel Aviv, Israel}
\author{Yonathan Touati}
\affiliation{Rachel and Selim Benin
School of Computer Science and Engineering,
The Hebrew University of Jerusalem,
91904 Jerusalem, Israel}


\begin{abstract}
Algorithms with unitary oracles can be nested, which makes them extremely versatile. An example is the phase estimation algorithm used in many candidate algorithms for quantum speed-up.
The search for new quantum algorithms benefits from understanding their limitations: Some tasks are impossible in quantum circuits, although their classical versions are easy -- for example cloning.
An example with a unitary oracle $U$ is the if clause, the task to implement controlled $U$ (up to the phase on $U$). In classical computation the conditional statement is easy and essential. In quantum circuits the if clause was shown impossible from one query to $U$. Is it possible from polynomially many queries?
Here we unify algorithms with a unitary oracle and develop a topological method to prove their limitations: 
No number of queries to $U$ and $U^\dagger$ lets quantum circuits implement the if clause, even if admitting approximations, postselection and relaxed causality.
We also show limitations of process tomography, oracle neutralization, and $\sqrt[\dim U]{U}$, $U^T$, and $U^\dagger$ algorithms. 
Our results strengthen an advantage of linear optics, challenge the experiments on relaxed causality, and motivate new algorithms with many-outcome measurements.
\end{abstract}

\maketitle 

\section{Introduction}

Quantum computation is believed to outperform classical computation. Many candidate algorithms for quantum speedup  \cite{shor1994algorithms,harrow2009quantum,brandao2017quantum, van2020quantum,temme2011quantum, whitfield2011simulation} use Kitaev's phase estimation algorithm \cite{kitaev95}. Phase estimation is so versatile because it is an \emph{oracle} algorithm; it estimates the phase of any unitary oracle $U$. In oracle algorithms, the input is an oracle -- an unknown subroutine, a black box whose inner workings are hidden. It is only possible to \emph{query} an oracle; to set its input and receive its output. With no assumptions on the subroutines, versatility is the main advantage of oracle algorithms. They can be gradually composed into complicated computation in independent building blocks.

Another advantage of oracles is the provability of limitations of computation. Factoring has a low-complexity quantum algorithm due to Shor \cite{shor1994algorithms}, while it is \emph{believed} to have only high-complexity classical algorithms. Rigorously proving complexity lower bounds is hard. Unlike factoring, \emph{oracle} problems do have some lower bound proof methods: the polynomial \cite{lbounds_polynomials}, hybrid \cite{bennett1997strengths} and adversary \cite{ambainis2000quantum,hoyer2007negative} methods. For example, the problem solved by Grover's quantum algorithm \cite{grover1996fast} is an oracle problem. It has a \emph{proven} complexity lower bound in classical  \cite{grover1996fast} as well as quantum \cite{bennett1997strengths,lbounds_polynomials} circuits. In Grover's quantum algorithm, the oracle is classical; it is a function on bit strings (representing classical computation). 
In phase estimation \cite{kitaev95} the oracle is quantum; representing any quantum circuit, it is a unitary matrix. Additional versatile algorithms could be singled out by a new -- possibly purely quantum -- lower-bound method.

Discovering limitations of quantum computation has further advantages. Consider the classically easy tasks, whose quantum analogs are impossible: cloning  \cite{no_cloning}, deleting  \cite{no_deleting}, universal-NOT gate  \cite{buzek1999unot}, programming  \cite{nielsen1997programmable}. These impossibilities lead to new protocols and deepen our understanding of physics; for example, no cloning inspires quantum cryptography \cite{gisin2002quantum} and serves as a check for various axiomatizations of quantum mechanics  \cite{chiribella2010purification,coecke2011categorical,kent2012minkowski,zurek2009darwinism,vitanyi2001kolmogorov}.

Usually, after an initial impossibility result, stronger no-go theorems and precise bounds follow by studying approximation, postselection, and complexity generalizations. The first two generalizations are often considered separately; approximate cloning \cite{buvzek1996quantum,buvzek1997quantum} independently from postselection (probabilistic) cloning \cite{duan1998probabilistic}, and similarly for other tasks \cite{hillery2006approximate, yang2020optimal, nielsen1997programmable, vidal2002storing, sedlak2019optimal, ishizaka2009quantum} (see \cite{chefles1999strategies} for an exception). 
The third generalization allows repeated input. Re-preparing a state or re-running a procedure is often necessary, for example for obtaining statistics from experiments. Hopefully, polynomially many repetitions suffice. One such complexity generalization is $N$-to-$M$ cloning \cite{gisin1997optimal, bruss1998optimal, chefles1999strategies, keyl1999optimal, werner1998optimal}: if $M=N+1$, $N$ is the \emph{sample complexity} of creating one clone. When the input is an oracle, we are interested in the algorithm's \emph{query complexity}.

Here we develop a unified formalism for quantum computation with unitary oracles, which captures all of the above generalizations simultaneously. Thus, we are able to derive the strongest-yet no-go theorem for the \emph{if-clause} task \cite{araujo_cU}; the task to implement controlled $U$ given an oracle $U$, the quantum version of the conditional statement essential to classical computation. We show that an if clause \emph{of any query complexity} is impossible, thus losing what could be a fundamental building block in quantum circuits. This, for example, limits the flexibility of phase estimation, which \emph{assumes} the controlled $U$ building blocks. Our method exploits features unique to unitary oracles; its novel topological approach adds to the few previously known lower-bound methods. To further demonstrate its applicability, we prove limitations of other previously studied unitary-oracle tasks: the neutralization, fractional power, inverse and transpose tasks. The limitations hold for quantum circuits (even with postselection and relaxed causality), but not for linear optics. This motivates developing implementation-dependent algorithms for oracle problems. 

\subsection{Problem statement}

To state the problem precisely, we first discus some preceding works. In addition to phase estimation \cite{kitaev95}, several known algorithms use oracle access to $U$ for implementing a function of $U$: the complex conjugate $\overline{U}$ \cite{compl_conj}, the transpose $U^T$ and inverse $U^\dagger$ \cite{quintino_inverse, quintino_transforming}, and the fractional power $U^q$ for some $q\in\mathbb{Q}$ \cite{sheridan_maslov_mosca, gilyen2019quantum}. The fractional power algorithms differ from the others in two ways: First, they require queries to $U$ (or controlled $U$) \emph{and} to its inverse. 
Second, they succeed not for \emph{all}, but only for \emph{most} $n$-qubit unitary oracles $U\in U(2^n)$ -- they are (exponential 
\footnote{The algorithms work for inputs $U$ that satisfy a condition captured by the gap parameter $g$ \protect{\cite{sheridan_maslov_mosca}}. As a function of $g$, the algorithms require $O(1/g)$ queries to succeed on these \protect{“good"} inputs. Haar measure over $U(2^n)$ induces the probability of a \protect{“bad"} input $\delta=1-(1-g)^{2^n}\leq 2^n g$, thus the algorithms fit within a requested $\delta$ with $O(2^n/\delta)$ queries. Since $\delta\geq 1-e^{-2^n g}$, the exponential query complexity is necessary.
}) average-case algorithms in the sense of \emph{heuristic schemes} \cite[Definition 11]{bogdanov2006average}.
Complexity theory mostly studies worst-case algorithms; algorithms succeeding even on the worst input. Worst-case algorithms satisfy input-independent guarantees, which is preferred for reliable computation. 

In different computational models the same oracle problem might have a different \emph{query complexity} -- the minimal number of queries the model needs to solve the problem. In the quantum-circuit model the queries have predefined causal order, and in process matrices the causal order is relaxed \cite{chiribella2013quantum,oreshkov2012quantum}. 
While quantum circuits need three queries, process matrices provide a two-query solution to the quantum switch \cite{chiribella2013quantum}, the task to superpose orders of two oracles. Thus, theoretically, process matrices are more powerful. But are they physically relevant? The debate about the physicality of the process-matrix solution to the quantum switch is ongoing, for example whether it was implemented \cite{oreshkov2019time} or only simulated \cite{paunkovic2020causal} by certain linear optics experiments \cite{procopio2015experimental,goswami2018indefinite,rubino2017experimental}. 

\begin{figure}
     \subfloat[]{
         \includegraphics[valign=c]{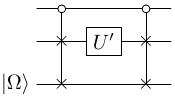}
         \vphantom{\includegraphics[valign=c]{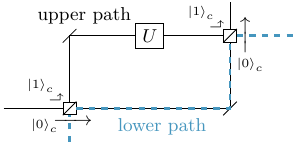}}
         \label{fig_kitaev_vacuum}
     }
     \subfloat[]{
         \includegraphics[valign=c]{fig1b_mach-zehnder_vacuum-cropped.pdf}
         \label{fig_mach-zehnder}
     }
        \caption{(a) This algorithm of Kitaev \protect{\cite{kitaev95}} implements $c_0(U')$ if given $\ket{\Omega}$ such that $U'\ket{\Omega}=\ket{\Omega}$. (b) The interferometric if clause \protect{\cite{araujo_cU}}: polarising beam splitters transmit or reflect the photon depending on its polarization degree of freedom $\left|i\right>_c$. $U$ acts on other degrees of freedom. 
}
        \label{fig:two graphs}
\end{figure}

A discrepancy between quantum circuits and linear optics is already exhibited by a simpler unitary-oracle problem -- the \emph{if clause}. 
Its classical version is fundamental for classical computation: Apply a subroutine if a control bit is $1$ and do nothing otherwise. In the quantum version, the control bit is a qubit and the subroutine is a $d$-dimensional unitary $U\in U(d)$, acting on $n=\lceil \log_2 d \rceil$ qubits. Following \cite{araujo_cU}, we identify this with the linear operator
\begin{equation}
    c_\phi(U):=\left|0\right>\!\!\left<0\right|\otimes I + e^{i \phi(U)}\left|1\right>\!\!\left<1\right|\otimes U\label{eq_if-clause}
\end{equation}
for some real function $\phi$. In the special case when $\phi$ is identically zero, the operator, $c_0(U)=\left|0\right>\!\!\left<0\right|\otimes I + \left|1\right>\!\!\left<1\right|\otimes U$, is controlled $U$. Access to $c_0(U)$ is assumed in phase estimation  \cite{kitaev95}. $c_0(U)$ is implementable given a $+1$ eigenstate (Fig. \ref{fig_kitaev_vacuum}) or a classical description \cite{barenco1995elementary, nielsen2002quantum} of $U$. Such information is not available in our oracle setting, which entertains more flexibility. When $U$ is an oracle, the freedom in $\phi(U)$ is necessary: It grants the if clause an insensitivity to $U$'s global phase  \cite{araujo_cU,soeda2013limitations,thompson2018quantum}. Ara{\'u}jo, Feix, Costa, and Brukner \cite{araujo_cU} proved that implementing \eqref{eq_if-clause} for all $U\in U(d)$ from only one query to $U$ is impossible in quantum circuits, though it is possible in linear optics (Fig. \ref{fig_mach-zehnder}). To explain the discrepancy, the authors argued that the gate $U$ in Fig. \ref{fig_mach-zehnder} is not completely unknown: its position is known, revealing that on the lower path-modes it acts as the identity. Any physical gate is restricted to a specific space (specific modes), so ref. \cite{araujo_cU} suggested adding the direct sum composition $U\oplus I$ to the quantum-circuit model. 

Dong, Nakayama, Soeda and Murao \cite{dong_controlled} found an algorithm for a related task, of implementing $c^d_\phi(U)$ for all $U\in U(d)$, where
 \begin{equation}
    c^m_\phi(U):=\left|0\right>\!\!\left<0\right|\otimes I+e^{i\phi(U)}\left|1\right>\!\!\left<1\right|\otimes U^m\text{.}\label{eq_cUm}
\end{equation}

\noindent In their algorithm $m=d$, $\phi(U)=\det(U)^{-1}$, and $U$ is queried $d$ times. Alternatively, with $d$ queries to the fractional power $U^\frac{1}{d}$ the algorithm implements $c^1_\phi(U)=c_\phi(U)$ ($m=1$), the if clause. Thus, the algorithm of \cite{dong_controlled} relates two unitary-oracle problems: the if clause and the fractional power.

The fractional power dates back to the $\smash{\sqrt{U}}$ question in Aaronson's 2006 list of “The ten most annoying questions in quantum computing” \cite{aaronsonten}. For a given fraction, a given unitary has several possible fractional powers. Sheridan, Maslov and Mosca \cite{sheridan_maslov_mosca} chose a particular fractional power, and, by finding a contradiction, proved that a worst-case algorithm implementing it for an unknown $U$ is impossible \cite[Lemma 1]{sheridan_maslov_mosca}. This oracle impossibility means that if a worst-case algorithm for some $\smash{\sqrt{U}}$ exists, it depends on the details of the implementation of $U$.

Another important oracle task is to remove the effect of an unknown and undesired evolution -- the \emph{neutralization} task. Often the evolution, the repeated application of some unitary $U$, cannot be avoided. However, initializing the state and/or interlacing $U$ with fixed gates can cancel out its effect. For example, spin echo \cite{spin_echo} is a neutralization algorithm for restricted $U$s; for qubit rotations around the $z$ axis. Spin echo neutralizes two queries to $U$. The neutralization algorithm of \cite{dong_controlled} works for general $d$-dimensional unitaries. It neutralizes $d$ queries to $U$.

The tasks mentioned above are all tasks with unitary oracles. Some algorithms and no-go theorems have been found, but maybe a unified approach is possible, with more emphasis on query complexity in order to strengthen the no-gos and to understand the optimality of the known algorithms. We have seen that the linear-optics complexity of the if clause is constant, $1$, and that its quantum-circuit complexity is strictly bigger. To bridge the gap, ref. \cite{araujo_cU} suggested modifying the quantum circuit formalism. But is the gap large enough -- e.g. beyond $\operatorname{poly}(n)$ \footnote{In the absence of oracles, the quantum-circuit model can simulate optics with only a polynomial overhead \protect{\cite{feynman1982simulating,abrams1997simulation,bravyi2002fermionic,aaronson2011computational}}.} -- to justify this? What is the quantum-circuit complexity of the if clause? It seems at most exponential, because $U$'s classical description suffices to build an if clause \cite{nielsen2002quantum}, and process tomography \cite{chuang1997prescription,poyatos1997complete,leung2000towards} yields the oracle's classical description from $\exp(n)$ queries.

\subsection{Results overview}

\begin{table*}
\caption{Examples of tasks and their approximate solvability by oracle algorithms with postselection}
\label{table_tasks2}
\begin{ruledtabular}
\begin{tabular}{l l l l l } 
  & task & for all $U\in U(d)$ & for most $U\in U(d)$ \\
 \hline
if clause & $( c_\phi, \{id, inv\})$ & \textbf{impossible} \footnotemark[1] & unitary algorithm \cite{dong_controlled,sheridan_maslov_mosca}\footnotemark[2] \\
if clause from fractions & $( c_\phi, \{U\mapsto U^\frac{1}{d}\})$ & exact unitary algorithm \cite{dong_controlled} & \\
complex conjugate & $(U\mapsto \overline{U}, \{id\})$ & exact unitary algorithm \cite{compl_conj} & \\
transpose & $(U\mapsto U^T, \{id\})$ & exact algorithm \cite{quintino_transforming, quintino_inverse} & \\
inverse & $(inv, \{id\})$ & exact algorithm \cite{quintino_transforming, quintino_inverse,compl_conj}\footnotemark[2] & \\
phase estimation & $(\operatorname{phase\_est}, \{c_0\})$ & unitary algorithm \cite{kitaev95} & \\
fractional power & $(U\mapsto U^\frac{1}{d}, \{id, inv\})$ & \textbf{impossible} \cite{sheridan_maslov_mosca}\footnotemark[1]\footnotemark[2] & unitary algorithm \cite{sheridan_maslov_mosca} 
\end{tabular}
\end{ruledtabular}
\footnotetext[1]{this result}
\footnotetext[2]{by composing two tasks}
\end{table*}

In this work we develop a unified formalism for computation with unitary oracles $U$. We phrase such tasks and worst-case algorithms as functions on unitaries. The algorithm functions have two important properties: continuity and homogeneity. We add to the few previously known lower-bound methods a different, topological method and derive results.

The first and the most surprising result shows that the quantum-circuit complexity of the if clause is infinite. We show that no matter how many times a postselection circuit queries $U$ and $U^\dagger$, it cannot implement the if clause with a nonzero success probability for every $U\in U(d)$ -- not even approximately! Surprisingly, process tomography suggested above fails for the if clause, and only works for a relaxed variant of the task. We prove this limitation of process tomography directly. Second, we give a different proof that implementing the fractional power $U^\frac{1}{d}$ for all $U\in U(d)$ is impossible, from any number of queries to $U$ and $U^\dagger$. Our contradiction is independent of the particular $d$th root function. Third, we show that quantum circuits and process matrices fail to neutralize some specific numbers of queries. Last, we prove related limitations regarding the transpose and inverse tasks.

The above results limit versatile quantum computation, and impact our understanding of tomography, measurements, linear optics and causality. Using process tomography for the if clause has a caveat: Instead of a superoperator estimate of $\rho\mapsto U\rho\, U^\dagger$, the if clause requires a matrix estimate of $U$. We show the limitation of such matrix tomography. Defining a relaxed if clause circumvents the limitation, but the algorithm must use a measurement beyond the binary success/fail type. This splits measurements into two groups with different effects on the quantum-circuit query complexity. The quantum-circuit model itself is compared to other models: linear optics and process matrices. On the if clause the quantum-circuit model turns out to be infinitely less efficient than linear optics! One might put some hope into relaxing causality; maybe linear optics is better matched by process matrices. While arguably true for the quantum switch task, this is wrong for the more fundamental if-clause task: its process-matrix complexity is infinite, too. The advantage of linear optics stems from restricting the oracle. The models with fully general unitary oracles have a property central to our impossibility proofs: homogeneity. Linear optics restrict oracles to the form $1\oplus U$ which breaks homogeneity. Differently from \cite{araujo_cU}, we attribute the direct sum to the linearity of linear optics (see \Cref{section_discussion}).

The rest of the paper is organized as follows. Section \ref{section_def_postselalg} defines oracle computation using functions on $d$-dimensional unitaries, $U\in U(d)$. Section \ref{section_if-clause} proves the if-clause impossibility and the process tomography limitation by exploiting the continuity of algorithms and the topology of the space $U(d)$ (\Cref{thm_top}). Section \ref{section_tasks} uses this topological approach to prove results regarding the neutralization, $1/d$th power, transpose, and inverse. In Section \ref{section_discussion} we emphasize that our method applies to the worst-case models with the exception of linear optics. We discuss the cause and the significance of this exception. Then we discuss relaxed causality and measurements.

\section{Algorithm as a function of the oracle}\label{section_def_postselalg}

An algorithm should solve a problem. Each problem mentioned in the Introduction asks to implement some operator $t(U)\in L(\mathcal{H})$, where $L(\mathcal{H})$ is the set of linear operators from a finite-dimensional Hilbert space $\mathcal{H}$ to itself. We call the function $t:U(d)\to L(\mathcal{H})$ \emph{task function}. The allowed types of queries to the oracle, for example $U$ and $U^\dagger$ queries, are also represented by functions on operators (\emph{query functions}) in our example by $id:U\mapsto U$ and $inv:U\mapsto U^\dagger$. Another example is the $c_0$ query function in phase estimation. A set of allowed query functions is a \emph{query alphabet} $\Sigma$. A task function and a query alphabet together form a pair $(t, \Sigma)$ called \emph{task}. Examples are in \Cref{table_tasks2}, which phrases the above mentioned preceding works in terms of \emph{tasks}. 

Next we represent algorithms by functions of $U$. In \cite{lbounds_polynomials, oracles_survey} a unitary quantum circuit with $N$ queries to an oracle $U$ corresponds to a sequence of unitary transformations
\begin{equation*}
V_0, U, V_1, U, \dots V_{{N}-1}, U, V_{N}
\end{equation*}

\noindent for some fixed unitaries $V_i$ acting on a possibly larger Hilbert space than the oracle $U$. We generalize the above in two ways. First, we add the possibility to query $U$ via query functions $\sigma_i$:
\begin{equation*}
V_0, \sigma_1(U), V_1, \sigma_2(U), \dots V_{{N}-1}, \sigma_{N}(U), V_{N}
\end{equation*}

\noindent Regarding $\sigma_i$ as a single character, the string $\mathbf{s}=\sigma_1 \sigma_2 \dots\sigma_N$, called the \emph{query sequence}, is fixed for the algorithm. Second, we allow for a projective measurement at the end $\{\Pi_\text{succ},\Pi_\text{fail}\}$, followed by the postselection on success $\Pi_\text{succ}$.

\begin{mydef}[Postselection oracle algorithm]\label{def_postsel_alg} A \emph{postselection oracle algorithm} $(A, \Sigma_A)$ is a function $A:U(d)\to L(\mathcal{H}\otimes\mathcal{K})$ of the form
\begin{equation}\label{eq_general_alg}
    A(U) = \Pi_\text{succ}\, V_{N} (\sigma_N(U)\otimes I_{ \mathcal{K}_N}) \dots V_1(\sigma_1(U)\otimes I_{ \mathcal{K}_1})V_0 \text{,}
\end{equation}
where $\sigma_i\in \Sigma_A$ for all $i\in [N]$. The algorithm implements the total linear operator $A(U)$ (Fig. \ref{supp_fig_model}) whenever its projective measurement $\{\Pi_\text{succ},\Pi_\text{fail}\}$ yields success. The probability of success must be nonzero if the ancilla Hilbert space $\mathcal{K}$ is initialized to the all-zero state, i.e. for all $U\in U(d)$ and for all $\ket{\xi}\in\mathcal{H}$
\begin{equation}
    ||A(U)\left(\ket{\xi}\otimes\ket{\mathbf{0}}_ \mathcal{K}\right)||^2> 0\text{,}
    \label{condition_post}
\end{equation}
which is the usual postselection condition \cite{aaronson_postBQP}.
\end{mydef}

\begin{figure}[b!]
        \vspace{-10pt}
         \includegraphics[]{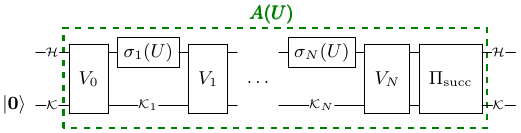}
         \caption{A postselection oracle algorithm has fixed unitary gates $V_i$, a fixed projection $\Pi_\text{succ}$ and special slots for queries $\sigma_i(U)$, 
         while identity is applied to the remaining Hilbert space $\mathcal{K}_i$.
         }
         \label{supp_fig_model}
\end{figure}

We will equivalently represent a postselection oracle algorithm by $(\mathcal{A}, \Sigma_A)$, where the calligraphic font denotes the corresponding function into \emph{superoperators}, i.e.
\begin{equation}
    \mathcal{A}(U)(\rho):=\operatorname{tr}_{\mathcal{K}}\left[A(U)\left(\rho\otimes\ket{\mathbf{0}}\!\!\bra{\mathbf{0}}_\mathcal{K}\right)A(U)^\dagger\right]\label{eq_postsel_superop}\text{,}
\end{equation}
\noindent where we already include the correct initialization of the ancilla, and its tracing-out at the end of the algorithm.
Denote by $\mathcal{S}_+(\mathcal{H})$ the completely positive superoperators $L(\mathcal{H})\to L(\mathcal{H})$ that are trace-nonvanishing, i.e. they map density states $\mathcal{D}(\mathcal{H})\subset L(\mathcal{H})$ to linear operators with strictly positive trace. Then the postselection condition \eqref{condition_post} translates to the requirement that $\mathcal{A}:U(d)\to \mathcal{S}_+(\mathcal{H})$.

The following generalization of postselection oracle algorithms relaxes the causal order of the queries.

\begin{mydef}[Process-matrix algorithm]\label{def_supermap}
A {\it process-matrix algorithm} $(\mathcal{A}, \Sigma_A)$ is a function $\mathcal{A}:U(d)\to \mathcal{S}_+(\mathcal{H})$ such that
\begin{equation}
    J_{\mathcal{A}(U)}=\operatorname{tr}_{{\substack{\text{inner}\\ \text{edges}}}}[W(J_{\tilde{\sigma}_1(U)}\otimes \dots \otimes J_{\tilde{\sigma}_N(U)}\otimes I_{{\substack{\text{open}\\ \text{edges}}}})]\label{eq:process_matrix}
\end{equation}
is the contraction in Fig. \ref{fig:process_matrix}, where $W$ is a fixed matrix,
$J_\Psi:=\sum_{i, j}\ket{i}\!\!\bra{j}\otimes\Psi(\ket{i}\!\!\bra{j})$ is the Choi isomorphism of $\Psi$,  and $\tilde{\sigma}_i(U)$ are query superoperators, $\tilde{\sigma}_i(U)(\rho):=\sigma_i(U)\rho\,\sigma_i(U)^\dagger$ for the query functions $\sigma_i\in\Sigma_A$.
\end{mydef}

\begin{figure}
    \centering
     \includegraphics[]{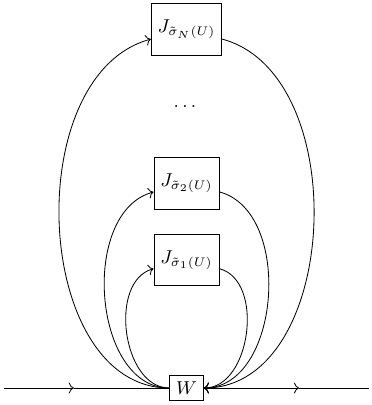}
     \caption{A process-matrix algorithm is a contraction of queries with a fixed matrix $W$. $W$ may or may not determine the queries' order.}
     \label{fig:process_matrix}
\end{figure}

\noindent As opposed to quantum circuits, the wires in Fig. \ref{fig:process_matrix} do not fix the order of the queries. How the queries are connected is determined by $W$, which may correspond to a fixed causal order, but also to a superposition of orders \cite{chiribella2013quantum}, and beyond \cite{oreshkov2012quantum, baumeler2016space}.

It remains to define what it means for an algorithm to achieve a task. Note that we consider only unitary $t(U)$.

\begin{mydef}[exactly achieving a task]\label{def_exactly_achieving}
A postselection oracle algorithm $(A, \Sigma_A)$ \emph{exactly achieves} the task $(t, \Sigma)$ if $\Sigma_A\subseteq \Sigma$ and if it implements $t(U)$ to the Hilbert space $\mathcal{H}$, provided that the ancilla Hilbert space $\mathcal{K}$ was initialized to the all-zero state $\left|\mathbf{0}\right>$, i.e. the operator equation
\begin{equation}
    A(U)\left(I\otimes\left|\mathbf{0}\right>\!\!\left<\boldsymbol{0}\right|\right)=t(U)\otimes\left|g(U)\right>\!\!\left<\boldsymbol{0}\right|\label{eq_exactly-achieves}
\end{equation}
holds for some $\left|g(U)\right>\in\mathcal{K}$.
\end{mydef}

\noindent The freedom in the unnormalized state $\left|g(U)\right>$ allows for any global phase and any postselection probability. By the postselection condition \eqref{condition_post} $\left|g(U)\right>$ is never zero. 

To define achieving a task for \emph{both} postselection oracle algorithms and process-matrix algorithms, we use an equivalent equation in terms of (trace non-vanishing) superoperators $\mathcal{A}(U)\in\mathcal{S}_+(\mathcal{H})$:

\begin{equation}
    \frac{\mathcal{A}(U)(\rho)}{\operatorname{tr}\!\left[\mathcal{A}(U)(\rho)\right]}=t(U)\rho \,t(U)^\dagger
    \label{eq_0-achieve}
\end{equation}
One direction of the equivalence is immediate from \eqref{eq_postsel_superop}, the other follows from Theorem 2.3 of \cite{technical}. The theorem also relates the \emph{errors} in the operator and superoperator languages. Here we continue with superoperators.

\begin{mydef}[$\epsilon$-approximately achieving a task]\label{def_supermap_approx_achieving}
Algorithm $(\mathcal{A}, \Sigma_A)$ {\it $\epsilon$-approxi\-mately achieves} the task $(t, \Sigma)$ if $\Sigma_A\subseteq \Sigma$ and if the renormalized output of the algorithm is always close to the task output, i.e. for all $U\in U(d)$
\begin{equation}
    \sup_{\substack{\mathcal{H}'\\[0.3em]\rho\in\mathcal{D}(\mathcal{H}\otimes\mathcal{H}')\hspace{-1.7em}}}\left|\left|\frac{\mathcal{A}(U)\!\otimes\! \mathcal{I}(\rho)}{\operatorname{tr}\!\left[\mathcal{A}(U)\!\otimes \!\mathcal{I}(\rho)\right]}-(t(U)\otimes I)\rho \,(t(U)\otimes I)^\dagger\right|\right|_{\operatorname{tr}}\!\!\!\leq \epsilon\text{,}\label{eq_eps-achieve}
\end{equation}
where $||\cdot||_{\operatorname{tr}}$ is the trace norm, and $I\in L(\mathcal{H}')$ and $\mathcal{I}\in \mathcal{S}_+(\mathcal{H}')$ are the identity operator and superoperator.
\end{mydef}

The left-hand side of inequality \eqref{eq_eps-achieve} is \emph{postselection diamond distance} \cite{technical}. It simplifies to diamond distance if $\mathcal{A}(U)$ is trace-preserving. Our use of the postselection condition and of diamond distance -- a worst-case measure -- means that \emph{$\epsilon$-approximately achieving} is worst-case: Even with the worst inputs $\rho$ and $U$ the algorithm succeeds with nonzero probability and then satisfies the error bound. Thus, our formalism excludes the two algorithms in column four of \Cref{table_tasks2}, and includes all the algorithms in column three, where “exact” means $\epsilon=0$, and “unitary” means $\Pi_\text{succ}=I$. 

\section{If-clause impossibility via topology}\label{section_if-clause}

Algorithms as functions of $U$ (\cref{eq_postsel_superop,eq:process_matrix}) have useful properties. The postselection oracle algorithm, function \eqref{eq_postsel_superop}, is continuous whenever the query functions are (see \eqref{eq_general_alg}). So is the process-matrix algorithm, function \eqref{eq:process_matrix}, because Choi isomorphism and its inverse are continuous. Moreover, both functions are $0$-homogeneous according to the definition below.

\begin{mydef}[homogeneity]\label{def_homog}
    A function $f:X\to Y$ is $m$-homogeneous for some $m\in\mathbb{Z}$ iff $f(\lambda x)=\lambda^m f(x)$ for any $x\in X$ and any scalar $\lambda$ such that $\lambda x\in X$.
\end{mydef}

\noindent For example, the scalars are restricted to the unit circle (one-sphere) $\lambda\in S^1$ if the domain of $f$ is the space of unitaries $U(d)$. The space $SU(2)$ is homeomorphic to the three-sphere $S^3$ and the following prominent result in topology studies continuous functions on $n$-spheres $S^n$:

\begin{butheorem}
    For any $f:S^n\to\mathbb{R}^n$ continuous there exist antipodal points mapped to the same value, $f(x)=f(-x)$.
\end{butheorem}

\noindent While the Borsuk-Ulam theorem (Fig. \ref{fig_tides}) implies a special case (see \Cref{section_bu}), for the fully general if-clause impossibility we prove a topological lemma that also exploits homogeneity.

\begin{figure}
    \includegraphics[width=0.22\textwidth]{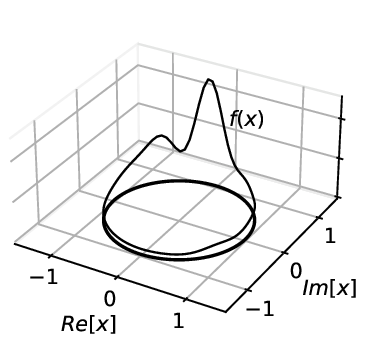}
    \vspace{-5pt}
    \includegraphics[width=0.22\textwidth]{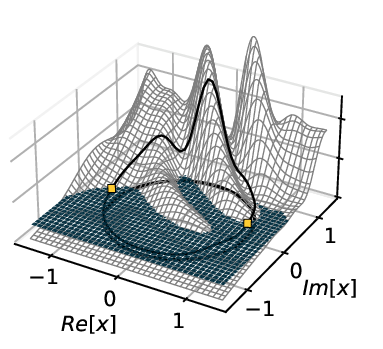}
    \caption{Intuition for the $n=1$ case of the Borsuk-Ulam theorem: At some moment the rising tide must hit points that are exactly opposite on the circle.}
    \label{fig_tides}
\end{figure}

\begin{mylemma}\label{thm_top}
Let $m\in\mathbb{Z}$. If
a continuous $m$-homogeneous function $f:U(d)\rightarrow S^{1}$ exists, then $d$ divides $m$.
\end{mylemma}

\noindent For example the determinant, emerging in the $c^d_\phi$ algorithm reviewed in \Cref{section_cUd}, is a $d$-homogeneous function: $\smash{\det(\lambda U)=\lambda^d \det(U)}$. By \Cref{thm_top} this homogeneity degree and its multiples are the only ones possible for continuous functions $U(d)\rightarrow S^{1}$. Consequently, in our main result $c^m_\phi$ obeys a dichotomy that depends on whether $d$ divides $m$ ($d|m$):

\begin{samepage}
\begin{mytheorem}\label{thm_main}
Let $d$ be the oracle dimension, let ${m}\in \mathbb{Z}$ and $\epsilon\in \left[0, 1/2\right)$. 

\setlength\topsep{0pt}
\begin{itemize}[topsep=0pt,align=left, leftmargin=5pt] 
\setlength\parskip{0pt}
\setlength\itemsep{1pt}
\item If $d | m$ a postselection oracle algorithm exists $\epsilon$-approximately achieving $(c^m_\phi, \{id, inv\})$
for some $\phi$. It makes $|m|$ queries.
\item If $d\nmid m$ 
no such postselection oracle algorithm or process-matrix algorithm exists, for any number of queries. 
\end{itemize}
\end{mytheorem}
\end{samepage}

The if-clause ($m=1$) impossibility is immediate. In addition to the following full proof of \Cref{thm_main}, the appendix contains two more proofs for only the exact, $\epsilon=0$, impossibility. The “operational” proof in \Cref{section_operational} reaches a contradiction by using the supposed $c^m_\phi$ algorithm as a building block in a larger circuit. The proof in \Cref{section_cover_spaces} for only the $m=1$ case \footnote{We thank an anonymous referee from the QIP conference for observing that such a proof is possible.} gives additional intuition: The special unitary group $SU(d)$ is a $d$th cover of $PU(d)$, the projective unitary group. This prevents the existence of a continuous map from a unitary superoperator to a matching operator, $PU(d)\to U(d)$, preventing an exact if-clause algorithm. The full proof below holds for approximations and relies on \Cref{thm_top} proven next. 

\begin{proof}[Proof of \Cref{thm_main}]
For $d|m$ direction we repeat $|m|/d$ times the algorithm of \cite{dong_controlled} (\Cref{section_cUd}), which exactly achieves $(c^d_\phi, \{id\})$ using $d$ queries. We switch the $id$ to $inv$ queries if $m<0$.

For the $d\nmid m$ direction, assume an algorithm $(\mathcal{A}, \{id, inv\})$ $\epsilon$-approximately achieves $(c^m_\phi, \{id, inv\})$ for some $\epsilon<1/2$. As mentioned, $\mathcal{A}$ is a continuous, $0$-homogeneous function of $U\in U(d)$. Consider the input state $\rho\in\mathcal{D}(\mathcal{H})$, $\rho=\ket{+}\!\!\bra{+}_c\otimes\ket{0}\!\!\bra{0}$, where the first register is the control qubit, and the second the target qudit. Using $\rho$ define $f: U(d)\to \mathbb{C}$ as
\begin{align*}
    f(U) :&=  \operatorname{tr}\left[\frac{\mathcal{A}(U)(\rho)}{\operatorname{tr}\left[\mathcal{A}(U)(\rho)\right]}\left(\ket{1}\!\!\bra{0}_c\otimes U^m\right)\right]\text{,}
\end{align*}
which is continuous, $m$-homogeneous. Observing that 
\begin{equation}
    \operatorname{tr}\left[c^m_\phi(U)\rho \,c^m_\phi(U)^\dagger\left(\ket{1}\!\!\bra{0}_c\otimes U^m\right)\right]=\frac{e^{-i\phi(U)}}{2}\text{,}
\end{equation}
and that multiplication by $\ket{1}\!\!\bra{0}_c\otimes U^m$ and taking trace cannot increase trace norm, we get 
\begin{align*}
    \left|f(U)-\frac{e^{-i\phi(U)}}{2}\right|&\leq\left|\left|
    \frac{\mathcal{A}(U)(\rho)}{\operatorname{tr}\left[\mathcal{A}(U)(\rho)\right]}
    -c^m_\phi(U)\rho \,c^m_\phi(U)^\dagger\right|\right|_{\operatorname{tr}}\text{,}
\end{align*}

\noindent which is less than $\epsilon$ by the assumption. Thus
\begin{equation*}
    |f(U)|\geq \frac{1}{2}-\epsilon >0 \text{,}
\end{equation*}
for all $U\in U(d)$, because $\epsilon <1/2$ by the assumption. This lets us define the continuous function $\widehat{f}(U):=f(U)/|f(U)|$. Recall that homogeneity of functions on $U(d)$ is defined with respect to scalars of unit norm only, $\lambda\in S^1$. Thus, $|f(U)|$ is $0$-homogeneous, and the continuous function $\widehat{f}:U(d)\to S^1$ is $m$-homogeneous. By \Cref{thm_top} $d|m$.
\end{proof}

\begin{proof}[Proof of \Cref{thm_top}]
    This elementary proof (for a shorter proof see \Cref{section_top_short}) studies the structure of the space $U(d)$, finding paths on this space that are continuously deformable, or \emph{homotopic}, to each other. Denoting a path by $U(t)$, $t\in [0, 1]$, we split $[0, 1]$ to $d$ identical intervals labeled by $j\in\{0,1\dots, d-1\}$ and use the parameter $\Delta\in[0, 1]$ to move inside each interval. Consider the paths $U(t)=e^{i2\pi t}I$,
\begin{equation*}
    U'\left(\frac{j+\Delta}{d}\right)=\bordermatrix{
    & {\scriptstyle{0}} & {\scriptstyle{\dots}} & {\scriptstyle{j}} & {\scriptstyle{\dots}} & \quad{\scriptstyle{d-1}}\cr
    {^{0}}&\ddots && &  &  \cr
    {\scriptstyle{\vdots}}& &  1  &  &  &  \cr
    {\scriptstyle{j}} & &  & e^{i2\pi\Delta} && &\cr
    {\scriptstyle{\vdots}} &&  &  & 	1  &  \cr
    {\scriptstyle{d-1}}& &  &  &  & \ddots \cr
    }
\end{equation*} 
and 
\begin{equation*}
    U''\left(\frac{j+\Delta}{d}\right)=\begin{pmatrix}
    e^{i2\pi\Delta}  &  &  & \cr
    & 1 && \cr
    &  &  \ddots  &  \cr
    &  &  & 1 
    \end{pmatrix}\text{,}
\end{equation*}
where all the off-diagonal matrix elements are zero and the diagonal elements change: $U(t)$ loops on all $d$ of them simultaneously, $U'(t)$ loops on all $d$ of them in sequence, and $U''(t)$ loops only on the zeroth one.

First, we show that $U$ is homotopic to $U'$. We can write $U(t)=\exp(i2\pi \operatorname{diag}[t,t\dots t])$ and $U'(\tfrac{j+\Delta}{d})=\exp(i2\pi \operatorname{diag}[1 \dots 1, \Delta, 0\dots 0])$, where $\exp$ is matrix exponentiation and $\operatorname{diag}v$ is the square matrix with the vector $v$ on the diagonal and zeros elsewhere. For any vector in the $d$-dimensional hypercube, $v\in[0,1]^d$, the continuous function $v\mapsto \exp(i2\pi\operatorname{diag} v)$ outputs a $d$-dimensional unitary. If fed a hypercube path $v(t)$ from $[0,\dots, 0]$ to $[1,\dots, 1]$, the function outputs a unitary path starting and ending at the identity: the unitary path $U(t)$, if the hypercube path is the straight line; the unitary path $U'(t)$, if the hypercube path travels along the edges of the hypercube (Fig. \ref{fig_proof_hypercube}). Since the two hypercube paths are homotopic, so are $U$ and $U'$.

\begin{figure}
         \centering
         \includegraphics[]{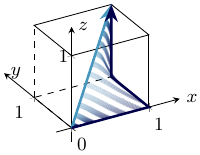}\vspace{-10pt}
         \caption{Sequential and simultaneous increase of each coordinate are homotopic inside the cube $[0,1]^3$.}
         \label{fig_proof_hypercube}
\end{figure}

To show that $U'$ and $U''$ are homotopic, note that $U'(\frac{j}{d})=U''(\frac{j}{d})$ for all $j$. Therefore, showing their homotopy on each of the $d$ subintervals suffices. Indeed, there exists a continuous unitary path $R_j(s)$, $s\in[0,1]$ such that conjugating by $R_j(s)$ continuously deforms $U'$ into $U''$ on the $j$th subinterval; at $s=0$ $U'$ stays unchanged, at $s=1$ it becomes $U''$ -- the role of the path $R_j(s)$ being analogous to the thick lines in Fig. \ref{fig_proof_hypercube}.
\vspace{-1em}
\begin{align*}
R_j(s)&=
\bordermatrix{
    &&&&&&& \cr
    {\scriptstyle{0}} & \cos\left(\frac{\pi}{2}s\right) &  &  &  & i\sin\left(\frac{\pi}{2}s\right)&&\cr
    {\scriptstyle{1}} & & 1 &&&&&\cr
    {\scriptstyle{\vdots}} & &  & \ddots &&&&\cr
    & &  &  & 1 &&&\cr
    {\scriptstyle{j}} &i\sin\left(\frac{\pi}{2}s\right) &  &  &  & \cos\left(\frac{\pi}{2}s\right) &&\cr
    &&  &  &  &  & 1 &\cr
    {^{\vdots}}& &  &  &  &  &  & \ddots\cr
    }
\end{align*}

\noindent so, indeed, the map $U'(\tfrac{j+\Delta}{d})\mapsto R_j(s)U'(\tfrac{j+\Delta}{d})R_j(s)^\dagger$ at $s=0$ does nothing and at $s=1$ swaps the $0$th and the $j$th basis state. This fixes a homotopy between $U'$ and $U''$. Combining with the first homotopy, we conclude that $U$ and $U''$ are homotopic.

Assuming a continuous $m$-homogeneous function $f:U(d)\rightarrow S^{1}$, next we show that $d$ must divide $m$. Since $f$ is $m$-homogeneous, we have $f(U(t))=e^{i m2\pi t}f(I)$, which corresponds to $m$ loops on the circle $S^1$ starting and ending at $f(I)$. $U''(t)$ is a periodic function with $d$ periods inside $[0,1]$, and so is the function $f(U''(t))$. Therefore, if $f(U''(t))$ makes $k\in\mathbb{Z}$ loops on the interval $[0,\tfrac{1}{d}]$, then it makes $kd$ loops on $S^1$ on the entire $[0,1]$ interval. Since $f$ is continuous, $f(U(t))$ must be homotopic to $f(U''(t))$. But on the circle $S^1$ two paths can be homotopic if and only if they make the same number of loops (this is captured by the fundamental group of $S^1$). We must have $m=kd$, which proves \Cref{thm_top}.
\end{proof}

\subsection{Process tomography for the if clause?}

\Cref{thm_main} seems to contradict process tomography. After we allow approximations and arbitrarily many queries, process tomography seems to offer a solution to the if clause. $\exp(n)$ queries suffice to estimate any oracle's \emph{superoperator} with exponentially small error \cite{surawy2022projected}. To build the if clause, our process tomography should output a matrix estimate $X\in U(d)$, instead of the superoperator estimate $\rho \mapsto X\rho X^\dagger$, because controlled-$X$,
\begin{equation}
    \rho\mapsto \left(\ket{0}\!\!\bra{0}\otimes I+\ket{1}\!\!\bra{1}\otimes X\right) \rho \left(\ket{0}\!\!\bra{0}\otimes I + \ket{1}\!\!\bra{1}\otimes X\right)^\dagger\text{,}
\end{equation}

\noindent contains cross-terms. Thus, in this section we study process tomography (of a unitary oracle) that outputs a \emph{matrix} estimate $X$.

If such tomography is possible, its matrix output $X$ approximately describes the oracle $U\in U(d)$ only up to the global phase, because distinguishing the global phase is unphysical. Thus, to quantify the error, we compare $\kappa(X)$ to $\kappa(U)$, where the function $\kappa: U(d)\to U(d)$ fixes one canonical form for all unitaries equal up to the global phase. Formally, $\kappa(e^{i\alpha}U)=\kappa(U)$ for all $\alpha\in[0, 2\pi)$ ($\kappa$ is $0$-homogeneous) and $\kappa(U)=e^{i\phi(U)}U$ for some real function $\phi$. An example is a function that makes the first nonzero matrix element of its input real positive.

While any practical process tomography procedure includes classical postprocessing of measurement outcomes, in principle it is convertible to a fully quantum algorithm with one measurement at the end that yields the full estimate -- in our case a full matrix $X$: all its matrix elements written down to some $N$-dependent precision. Being a measurement outcome, this matrix estimate $X$ is a random variable that takes values in the discrete space $\Omega_N\subset U(d)$.  

\begin{mydef}[Process tomography yielding a unitary matrix]\label{def_tom}
Process tomography of a unitary oracle is an algorithm making $N$ queries to $U\in U(d)$ and then measuring, obtaining the outcome $X\in\Omega_N\subset U(d)$ with probability $p_{N,U}(X)$. The output distribution $p_{N,U}$ is such that for all $U\in U(d)$
\begin{equation*}
    \operatorname{Pr}_{X\sim p_{N,U}(\cdot)}\left[\left|\left|\kappa(X)-\kappa(U)\right|\right|_{\operatorname{op}}\leq \epsilon_N\right]\geq 1-\delta_N\text{,}
\end{equation*}
where $\lim_{N\to\infty}\epsilon_N=0$ and $\lim_{N\to\infty}\delta_N=0$.
\end{mydef}

\noindent The \emph{operator} norm $\left|\left|\cdot\right|\right|_{\operatorname{op}}$ distinguishes \Cref{def_tom} from existing process tomography algorithms that ensure a vanishing distance of \emph{superoperators} \cite{surawy2022projected}. Can we relate the two distances? Vanishing operator norm implies vanishing diamond distance of the corresponding superoperators \cite[Lemma 12.6]{mixed_states_aharonov98}. Conversely, vanishing diamond distance $\left|\left|X\cdot X^\dagger-U\cdot U^\dagger \right|\right|_\diamondsuit$ implies vanishing 
\begin{equation}
    \left|\left|X-\bra{v}U^\dagger X\ket{v}U\right|\right|_{\operatorname{op}}=\left|\left|\kappa(X)-\bra{v}U^\dagger \kappa(X)\ket{v}U\right|\right|_{\operatorname{op}}\label{eq_X-dep_phase}
\end{equation}
for any unit vector $\ket{v}$ \cite[Theorem 1.3]{technical}. However, \Cref{def_tom} requires the $U$-term to be independent of $X$. This makes all the difference.

\begin{mytheorem}\label{result_state_tom}
    Process tomography of \Cref{def_tom} is impossible.
\end{mytheorem}

\begin{proof}
The triangle inequality gives
\begin{eqnarray}
   \left|\left|\kappa(U)-\kappa(V)\right|\right|_{\operatorname{op}}
    &\leq & \sigma_{{N,U}} + \!\sum_{X\in\Omega_N}\!\left|p_{N,U}(X)-p_{N,V}(X)\right|\nonumber\\
    && + \sigma_{{N,V}}\text{,}\label{eq:tom}
\end{eqnarray}
where $\sigma_{N,U}:=\sum_{X\in\Omega_N}p_{N,U}(X)\left|\left|
    \kappa(U)-\kappa(X)\right|\right|_{\operatorname{op}}$ is a sum which can be split to two sums -- over “good” and “bad” $X$-values -- and then by \Cref{def_tom} upper bounded by $\epsilon_N+2\delta_N$. The middle term of \eqref{eq:tom} is the total variational distance of two probability distributions. Equivalently, it is the trace distance of two classical density states $\rho_{N,U}=\operatorname{diag}(p_{N,U})$ and $\rho_{N,V}=\operatorname{diag}(p_{N,V})$, the outputs of the process tomography algorithm making $N$ queries to $U$ or $V$ respectively. By the sub-additivity of diamond distance, the trace distance of the outputs is upper bounded $||\rho_{N,U}-\rho_{N,V}||_{\operatorname{tr}}\leq N||U\cdot U^\dagger - V\cdot V^\dagger||_\diamondsuit$. Using the inequality $||U\cdot U^\dagger - V\cdot V^\dagger||_\diamondsuit\leq 2||U-V||_{\operatorname{op}}$ \cite[Lemma 12.6]{mixed_states_aharonov98} we get 
\begin{equation}
   \left|\left|\kappa(U)-\kappa(V)\right|\right|_{\operatorname{op}}\leq 
    2N\left|\left|U - V\right|\right|_{\operatorname{op}} + 2\epsilon_N + 4\delta_N\text{.}\label{eq_cont}
\end{equation}
Thus $\kappa$ is continuous. 
But if $\kappa$ is continuous (it is $0$-homogeneous by definition), then defining $f(U):=\bra{0}\kappa(U)U^\dagger\ket{0}=e^{i\phi(U)}$ gives us $f:U(d)\to S^1$ continuous and $-1$-homogeneous, contradicting \Cref{thm_top}.
\end{proof}

Building on top of known process tomography algorithms cannot produce matrix estimates in the sense of \Cref{def_tom} and the proof reveals the reason: The arbitrary closeness of the \emph{matrices} $\kappa(X)$, $\kappa(U)$ cannot be guaranteed because any $\kappa$ must be discontinuous (see also \Cref{lem:cover1} in \Cref{section_cover_spaces}). Consider the $\kappa$ example that makes the first nonzero matrix element real positive. This $\kappa$ is discontinuous. At the discontinuity is, for example, the Pauli-X operator $U=\kappa(U)=\sigma_x$, because its first matrix element is zero, $\langle 0 |\sigma_x | 0\rangle =0$. We can choose an estimate $X$ with small but negative $\langle 0 |X | 0\rangle$, so that $\kappa(X)\approx -\sigma_x$. This maximizes the operator norm difference $\left|\left|\kappa(X)-\sigma_x\right|\right|_{\operatorname{op}}\approx 2$. Thus, near the discontinuity, $\smash{\left|\left|X\cdot X^\dagger-U\cdot U^\dagger \right|\right|_\diamondsuit}$ vanishes, but $\left|\left|\kappa(X)-\kappa(U)\right|\right|_{\operatorname{op}}$ is large -- the discontinuity in $\kappa$ amplifies the error arbitrarily.

We suggest modifying \Cref{def_tom} to use multiple $\kappa_j$, each continuous on some subset of $U(d)$ and applied only if the sampled $X$ lies deep inside this subset. For example, the following functions, indexed by $j\in\{0,1\dots, d-1\}$, split $U(d)$ to $d$ such (overlapping) subsets: 
\begin{equation}
    \kappa_j(U):=  
    \begin{cases}
    \frac{\overline{{\langle j |U|0\rangle}}}{|\langle j |U|0\rangle|}U & \text{if } \langle j |U|0\rangle\neq 0 \\
    \kappa_{j+1 \operatorname{mod}d}(U) & \text{otherwise}\label{eq:kappa_ex}
    \end{cases}
\end{equation}

\noindent Having sampled $X$ choose $\kappa_r$, where $r=r(X)=\min\{j: |\langle j |X|0\rangle|\geq 1/\sqrt{d}\}$. 
The chosen function is continuous around the sampled $X$, circumventing the impossibility. Thus given some set $\{\kappa_r\}_{r}$ and a rule $r=r(X)$ we define

\begin{mydef}[Process tomography yielding a unitary matrix -- revised]\label{def_tom2}
Process tomography of a unitary oracle is an algorithm making $N$ queries to $U\in U(d)$ and then measuring, obtaining the outcome $X\in\Omega_N\subset U(d)$ with probability $p_{N,U}(X)$. The output distribution $p_{N,U}$ is such that for all $U\in U(d)$
\begin{equation*}
    \operatorname{Pr}_{X\sim p_{N,U}(\cdot)}\left[\left|\left|\kappa_r(X)-\kappa_r(U)\right|\right|_{\operatorname{op}}\leq \epsilon_N\right]\geq 1-\delta_N\text{,}
\end{equation*}
where $\lim_{N\to\infty}\epsilon_N=0$, $\lim_{N\to\infty}\delta_N=0$ and $r=r(X)$.
\end{mydef}

\begin{table}
\caption{If-clause achievability by algorithms that use a different type of measurement}\label{table_det-random}
\begin{ruledtabular}
\begin{tabular}{ l l l } 
  measurement with &  if clause  &  random if clause \\
\hline
 one success outcome & \ding{56} & \ding{56} (\Cref{section_relaxed}) \\
 many success outcomes  & \ding{56} (\Cref{section_relaxed}) & \ding{52}\! (\Cref{section_def_tom2_exist},\ref{section_relaxed})
\end{tabular}
\end{ruledtabular}
\end{table}

\Cref{section_def_tom2_exist} argues that any standard process tomography \cite{surawy2022projected} combined with \eqref{eq:kappa_ex} satisfies this definition. The dependence of $\kappa_r$ on the sampled $X$ yields randomness beyond the estimation error: There are oracles $U$ for which the sampling yields either an approximation of $\kappa_0(U)$ or an approximation of $\kappa_1(U)$, both with a non-negligible probability. (Consider the Fourier transform oracle $\langle j | U |k\rangle = e^{i\frac{2\pi}{d}jk}/\sqrt{d}$ and $\kappa_r$, $r(X)$ of example \eqref{eq:kappa_ex}.) Having obtained a classical estimate of $\kappa_r(U)$, we build a circuit close to $\ket{0}\!\!\bra{0}\otimes I + \ket{1}\!\!\bra{1}\otimes \kappa_r(U)=c_{\phi_r}(U)$, but the value $r$ changes the circuit's operator (and the superoperator) beyond the estimation error. The index $r$ is known and random; calculated from the measurement outcome obtained along implementing $c_{\phi_r}(U)$, instead of the if clause, process tomography yields the \emph{random if clause} (see \Cref{section_relaxed}). A many-outcome measurement is necessary for randomness. Thus, on the random if clause, quantum circuits using a many-outcome measurement are more powerful than those using a binary success/fail measurement (\Cref{table_det-random}). 

We can make the tomography-based strategy fully quantum, deferring the $r$ measurement to an additional ancilla register. Before the measurement, this implements a superoperator close to
\begin{equation}
    \rho\mapsto \sum_r p_{r \,U}\left[c_{\phi_r}(U)\otimes\ket{r}\right]\rho \left[c_{\phi_r}(U)^\dagger\otimes \bra{r}\right]\text{,}\label{eq:main_entangled}
\end{equation}
corresponding to the \emph{entangled if clause} (see \Cref{section_relaxed}).

After the deferring, a $d$-outcome measurement suffices to achieve the random if clause, because $r$ in 
$c_{\phi_r}(U)=\ket{0}\!\!\bra{0}\otimes I + \ket{1}\!\!\bra{1}\otimes e^{i\phi_r(U)}U$ takes $d$ values. This measurement is less complex than the full process tomography \footnote{With the full process tomography, one can also implement $c_{\phi_X}(U)$ with $X\in\Omega_N$, because the outcomes $X$ yield $\ket{0}\!\!\bra{0}\otimes I + \ket{1}\!\!\bra{1}\otimes X$ which is close to $c_{\phi_X}(U)=\ket{0}\!\!\bra{0}\otimes I + \ket{1}\!\!\bra{1}\otimes e^{i\phi_X(U)}U$ via expression \eqref{eq_X-dep_phase} and by choosing $\ket{v}$ to be an eigenvector of $U^\dagger X$ so that $\bra{v}U^\dagger X\ket{v}=:e^{i\phi_X(U)}$.}. Since the $|\Omega_N|$-outcome tomography measurement turns out to be unnecessary, one may ask whether the $d$-outcome measurement is optimal or can be simplified further. Even more pressing is the question of the query complexity of the random if clause. The brute-force tomography approach is likely far from optimal.

\section{Applications to other tasks}\label{section_tasks}

Here we use our method to study other tasks, namely the neutralization, fractional power, transpose, and inverse task. For each task we first review some preceding work. 

\subsection{Review of the $c_\phi^d$ algorithm via the neutralization}\label{section_cUd}
In this section we review the unitary algorithm of Dong, Nakayama, Soeda, and Murao \cite{dong_controlled} that exactly implements $c_\phi^d(U)$. Suppose we try to implement $c_\phi^m(U)$ taking the following approach: 
(I) use the controlled-swap gates as in \Cref{fig_kitaev_vacuum}; if the control is $\ket{1}$ the oracle $U$ is queried in sequence $m$ times, if the control is $\ket{0}$ the $m$ queries are 'moved' to an ancilla register.
(II) If the control is $\ket{0}$ apply to the ancilla a subroutine that makes the 'moved' queries have no effect other than a multiplication by a scalar.
In the words of ref. \cite{dong_controlled}, the subroutine of step II \emph{neutralizes} the 'moved' queries. Fitting the neutralization into our framework we define:

\begin{mydef}[$\epsilon$-approximately neutralizing a query sequence]\label{def_neutr}
Given a query sequence $\mathbf{s}$, a postselection oracle algorithm $A_\varnothing:U(d)\to L( \mathcal{H})$ {\it $\epsilon$-approximately neutralizes} $\mathbf{s}$, if its query sequence is $\mathbf{s}$ and still it leaves an all-zero input (almost) untouched:
\begin{equation}
    \left|\left|\frac{A_\varnothing(U)\ket{\mathbf{0}}\!\!\bra{\mathbf{0}}A_\varnothing(U)^\dagger}{\operatorname{tr}\left[A_\varnothing(U)\ket{\mathbf{0}}\!\!\bra{\mathbf{0}}A_\varnothing(U)^\dagger\right]}-\ket{\mathbf{0}}\!\!\bra{\mathbf{0}}\right|\right|_{\operatorname{tr}}\leq \epsilon\text{.}\label{eq_eps-neutralize}
\end{equation}
\end{mydef}

\noindent \Cref{def_neutr} uses the algorithm's operator instead of the superoperator in order to forbid the (partial) trace; applying the oracles to an ancilla that is later traced out is not considered a neutralization.

\begin{figure*}
     \subfloat[]{
     \includegraphics[valign=c]{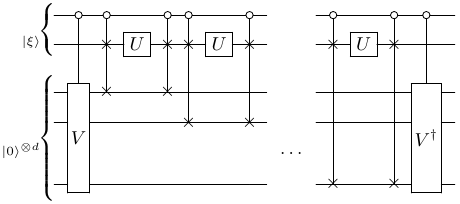}
     \vphantom{\includegraphics[valign=c]{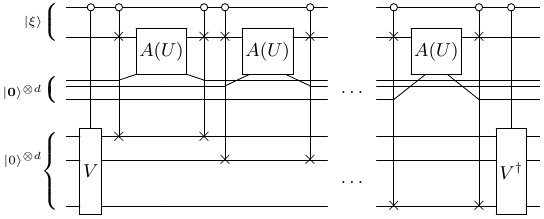}}
\label{fig_cUd}
     }
     \subfloat[]{
     \includegraphics[valign=c]{fig6b_cUapprox-cropped.pdf}
\label{fig_cU_approx}
     }
     \caption{(a) The unitary algorithm due to \protect \cite{dong_controlled} makes $d$ queries to $U$ and implements $c_\phi^d(U)$ to its first two registers. (b) The queries to $U$ are replaced by subroutines $A(U)$ that possibly act on additional ancillae. $A(U)$ is assumed to approximately apply $t(U)=U^\frac{1}{d}$ to its non-ancilla Hilbert space.
}\label{fig_cUd_both}
\end{figure*}

Observing that the totally antisymmetric state is an eigenstate of $U^{\otimes d}$, ref.  \cite{dong_controlled} presented an algorithm \emph{exactly neutralizing} ($\epsilon=0$) the sequence $\mathbf{s}=(id)^d$ of \emph{parallel} queries. This postselection oracle algorithm, $\smash{A^{\parallel d}_\varnothing(U)}$, is unitary. Ref. \cite{dong_controlled} used $\smash{A^{\parallel d}_\varnothing(U)}$ to build the algorithm of Fig. \ref{fig_cUd}, the (unitary) postselection oracle algorithm that exactly achieves $(c_\phi^d, \{ id\})$:

\begin{myconstr}[due to  \cite{dong_controlled}]
For the correctly chosen fixed unitary $V$, the unitary algorithm of Fig. \ref{fig_cUd} makes $d$ queries to $U$ and implements the operator $A(U)$ such that $A(U)\left( I_{\mathcal{H}} \otimes\ket{\mathbf{0}}\!\!\bra{\mathbf{0}}_ \mathcal{K}\right)$ equals
\begin{equation}
 \left(\ket{0}\!\!\bra{0}\otimes I + \det(U)^{-1}\ket{1}\!\!\bra{1}\otimes U^{d}\right)\otimes\det(U)\ket{\mathbf{0}}\!\!\bra{\mathbf{0}}_ \mathcal{K}\text{.}\label{eq_cUd}
\end{equation}
\end{myconstr}

\begin{proof}
First note that we can write the determinant as
\begin{align}
    \det\left(U\right)&=\sum_{{\pi} \in S_{d}}\operatorname{sgn}\left({\pi}\right)\prod_{i=1}^{d}\bra{i}U\ket{{\pi}\!\left(i\right)}\nonumber\\
    &=\frac{1}{d!}\sum_{{\pi},\tau\in S_{d}}\operatorname{sgn}\left(\tau\right)\operatorname{sgn}\left({\pi}\right)\prod_{i=1}^{d}\bra{\tau\!\left(i\right)}U\ket{{\pi}\!\left(i\right)}\label{eq_det}\\
    &=\bra{\chi_d}U^{\otimes d}\ket{\chi_d}\nonumber\text{,}
\end{align}
 where 
 $\smash{\ket{\chi_d}=\frac{1}{\sqrt{d!}}\sum_{{\pi}\in S_{d}}\operatorname{sgn}\left({\pi}\right)
 \ket{{\pi}\!\left(1\right)}\otimes\dots\otimes\ket{{\pi}\!\left(d\right)}\in \mathcal{K}}$ is the totally antisymmetric state. For a proof of \cref{eq_det} see \Cref{section_minors}. Note that we have indexed the basis vectors starting with $i=1$. Switching to the usual labeling that starts with zero, observe that $\ket{\chi_d}$ is normalized and therefore there exists a unitary $V\in L( \mathcal{K})$ such that 
 $\smash{V\ket{\mathbf{0}}=\ket{\chi_d}}$. Define the unitary oracle algorithm 
 $\smash{A^{\parallel d}_\varnothing(U):=V^\dagger U^{\otimes d} V}$. 
 Since 
 $\smash{|\bra{\chi_d}U^{\otimes d}\ket{\chi_d}|=1}$, the unitary $\smash{A^{\parallel d}_\varnothing(U)}$ on input $\ket{\mathbf{0}}$ outputs $\ket{\mathbf{0}}$ with probability $1$ for all $U\in U(d)$. Therefore $\smash{A^{\parallel d}_\varnothing}$ exactly neutralizes its queries, $\smash{A^{\parallel d}_\varnothing(U)\ket{\mathbf{0}}\!\!\bra{\mathbf{0}}=\det(U)\ket{\mathbf{0}}\!\!\bra{\mathbf{0}}}$. Observe that the algorithm of Fig. \ref{fig_cUd} implements $\smash{\ket{0}\!\!\bra{0}\otimes  I \otimes A^{\parallel d}_\varnothing(U)_ \mathcal{K}+ \ket{1}\!\!\bra{1}\otimes U^d\otimes I_ \mathcal{K}}$, from which \eqref{eq_cUd} follows.
\end{proof}

\subsection{Neutralizable query sequences}\label{section_neutr}

Complementary to the neutralizing algorithm $\smash{A^{\parallel d}_\varnothing(U)}$ above, here we prove the impossibility of neutralizing the remaining numbers of parallel, as well as sequential $id$ queries. At first glance, it seems that one could always add a query and then make it have no effect in the algorithm. This is less obvious when tracing out subsystems is disallowed, as in the neutralization (\Cref{def_neutr}). Indeed, for most query numbers, the neutralization is impossible. Conceptually, this no-neutralization of oracle queries is analogous to the no-deleting of states \cite{no_deleting}.

\begin{mytheorem}\label{thm_neutr}
Let $d\in \mathbb{N}$ be the dimension of the oracle $U$, let $\boldsymbol{s}=(id)^m$ and $\epsilon\in[0,1)$. 
If there exists a post\-selection oracle algorithm $\epsilon$-approximately neutralizing $\boldsymbol{s}$, then $d|m$.
\end{mytheorem}

\begin{proof}\label{proof_neutr2}
Suppose $(A_\varnothing, \{id\})$ is a postselection oracle algorithm $\epsilon$-approximately neutralizing $\boldsymbol{s}$. By \cref{eq_general_alg}, $A_\varnothing:U(d)\to L( \mathcal{H})$ is continuous and $m$-homogeneous and so is
\begin{equation*}
    f_{A_\varnothing}(U):=\frac{\bra{\mathbf{0}}A_\varnothing(U)\ket{\mathbf{0}}}{\sqrt{\operatorname{tr}\left[A_\varnothing(U)\ket{\mathbf{0}}\!\!\bra{\mathbf{0}}A_\varnothing(U)^\dagger\right]}}\text{,}
\end{equation*}
Since taking $X\mapsto \bra{\mathbf{0}}X\ket{\mathbf{0}}$ can only contract trace norm, inequality \eqref{eq_eps-neutralize} implies that
\begin{equation*}
    \left|\left|f_{A_\varnothing}(U)\right|^2-1\right|\leq \epsilon <1
\end{equation*}
so  $\left|f_{A_\varnothing}(U)\right|\neq 0$ holds for all $U\in U(d)$ and we can define $\widehat{f}_{A_\varnothing}(U):=f_{A_\varnothing}(U)/\left|f_{A_\varnothing}(U)\right|$ which maps from $U(d)$ to the circle and is continuous $m$-homogeneous. Then by \Cref{thm_top}, $m$ is a multiple of $d$.
\end{proof}

\noindent \Cref{thm_neutr} does not distinguish between parallel and sequential queries -- it applies to both. More generally, all of the above applies also to process-matrix algorithms (a model already without the parallel/sequential distinction).

\subsection{Fractional power impossibility}\label{section_worst-case} 
Here we are interested in the $1/d$th power task $(U\mapsto U^{\frac{1}{d}}, \{id, inv\})$. It is listed in \Cref{table_tasks2}. For this task, Sheridan, Maslov and Mosca \cite{sheridan_maslov_mosca} found an average-case algorithm -- an algorithm that works for most, but not all $U\in U(d)$. Given any fraction $q\in\mathbb{Q}$, the algorithm makes $id$, $inv$ queries (after the adjustment in Appendix A of \cite{sheridan_maslov_mosca} that removes the controlled-$U$ queries) and implements an operator that for most $U\in U(d)$ is close to $U^q\otimes U^{-q}\otimes  I_{\mathcal{K}_2}$. This can be arbitrarily close at the cost of increasing the dimension of the ancilla Hilbert space $\mathcal{K}_2$. In \cite{sheridan_maslov_mosca} $U\mapsto U^q$ corresponds to a specific fractional-power function. Many exist, because each unitary has many possible roots. For example, for any $j,k\in \{0, 1,\dots m-1\}$ a possible $m$th root of 
\begin{equation*}
    \begin{pmatrix}
        1 & 0\\
        0 & -1
    \end{pmatrix}\text{ is }\begin{pmatrix}
        e^{i\frac{2\pi j}{m}} & 0\\
        0 & e^{i\frac{\pi + 2\pi k}{m}}
    \end{pmatrix}\text{.}
\end{equation*}

Here we show that, unlike in the average case, in the worst case we cannot implement $U^{\frac{1}{d}}$ to within an arbitrarily small error $\epsilon$. For the specific $1/d$th power function chosen in \cite{sheridan_maslov_mosca}, the impossibility was already proven in \cite[Lemma 1]{sheridan_maslov_mosca}. However, the following corollary of \Cref{thm_main} holds for \emph{any} $1/d$th power function.

\begin{mycor}\label{thm_root} Let $t:U(d)\to U(d)$ be any function such that $t(U)^d=U$. No postselection oracle algorithm or process-matrix algorithm can $\epsilon$-approximately achieve the $1/d$th power task $(t, \{id, inv\})$ for $\epsilon< 1/2d$, no matter how many queries it makes.
\end{mycor}

The impossibility of errors smaller than $1/\exp(n)$ for $n$-qubit unitaries is discouraging, because the “inverse exponential scaling” here is qualitatively different from how errors scale in useful algorithms. In a useful algorithm the error $\epsilon$ scales inverse exponentially  or  even inverse polynomially with respect to the \emph{complexity} of the algorithm. In other words, the complexity scales as $\log_2(1/\epsilon)$ or $\operatorname{poly}(1/\epsilon)$ \cite{gilyen2019quantum,lloyd1996universal,bacon2006efficient,van2002efficient}. Here, instead, we found $\epsilon$-values for which arbitrarily complex algorithms fail.

Another way to understand the strength of the result is to consider only two-dimensional unitary oracles, which intuitively should be the simplest to work with. According to the result, the square root of a two dimensional unitary is impossible to implement within $\epsilon<1/4$ by quantum circuits of any complexity.

The intuition for a complexity-independent lower bound on $\epsilon$ is simple: we are trying to approximate a discontinuous function $t$ by some {\it continuous} algorithm.

\begin{proof}
Suppose $(A, \{id, inv\})$ $\epsilon$-approximately achieves the $1/d$th power task $(t, \{id, inv\})$. Use $A(U)$ instead of each query to $U$ in the $(c_\phi^d, \{id, inv\})$ algorithm of ref.  \cite{dong_controlled} as in Fig. \ref{fig_cUd_both}. The resulting algorithm in the exact ($\epsilon=0$) case $A(U)\left(I\otimes \ket{\mathbf{0}}\!\!\bra{\mathbf{0}}\right)=t(U)\otimes \ket{g(U)}\!\!\bra{\mathbf{0}}$ implements
\begin{eqnarray*}
    &\left(\ket{0}\!\!\bra{0}\otimes I + \det(t(U))^{-1}\ket{1}\!\!\bra{1}\otimes U\right)\!\!\!&\otimes \left(\ket{g(U)}\!\!\bra{\mathbf{0}}\right)^{\otimes d}\\
    &&\otimes \det(t(U))\ket{\mathbf{0}}\!\!\bra{\mathbf{0}}\text{,}
\end{eqnarray*}

\noindent which corresponds to the if clause. For $A$ with error $\epsilon$, the composed algorithm in Fig. \ref{fig_cU_approx} has error $d\epsilon$, because it uses the $A(U)$ subroutine $d$ times and because postselection diamond distance satisfies weak subadditivity (Theorem 2.1 in \cite{technical}). By \Cref{thm_main} we must have $d\epsilon\geq 1/2$ implying that $\epsilon\geq 1/2d$.
\end{proof}

\subsection{Review of the complex conjugation algorithm}\label{section_compl_conj}
Complex conjugation is used as a subroutine in the inverse algorithm discussed in the subsequent sections. Here we review the complex conjugation algorithm of Miyazaki, Soeda and Murao  \cite{compl_conj}, giving an alternative proof of its correctness.

\begin{myconstr}[due to  \cite{compl_conj}]\label{thm_conj}
For a correctly chosen fixed unitary $V_{conj}$, the unitary algorithm in the dashed box in Fig. \ref{fig_Udagger} implements $A_\text{conj}(U)$ such that
\begin{equation}\label{eq_complconj}
   A_\text{conj}(U)\left( I\otimes\ket{0}\!\!\bra{0}^{\otimes d-2}\right)= \overline{U}\otimes\det(U)\ket{0}\!\!\bra{0}^{\otimes d-2}
\end{equation} 
while making $d-1$ queries to $U$.
\end{myconstr}

Therefore, the (unitary) postselection oracle algorithm exactly achieves the complex conjugation task $(U\mapsto \overline{U}, \{id\})$.

\begin{figure}
\centering
\includegraphics[]{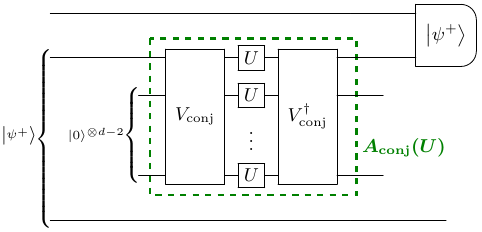}
\caption{After postselecting the first two registers on outcome $\ket{\psi^+}$, this algorithm of \protect{\cite{quintino_inverse, quintino_transforming}} implements $U^\dagger$. Its subroutine in the dashed box, the unitary algorithm of \protect{\cite{compl_conj}}, implements the complex conjugate $\overline{U}$ from  $d-1$ parallel queries to $U$.
}\label{fig_Udagger}
\end{figure}

\begin{proof}
Cramer's rule for matrix inversion reads $U^{-1}=\det(U)^{-1}C(U)^T$, where $C(U)$ is the cofactor matrix of $U$. We define it presently, but first note that from Cramer's rule $C(U)=\det(U) \overline{U}$. It remains to show that the algorithm implements $C(U)$ to its first register corresponding to a $d$-dimesional Hilbert space $\mathcal{H}$. Cofactor matrix of $U$ is the matrix of minors of $U$, $C(U)_{i,j}=(-1)^{i+j}\det(U_{\centernot{i},\centernot{j}})$ or:
\vspace{-10pt}
\begin{equation}
C(U) = \sum_{i, j = 1}^d(-1)^{i+j}\det(U_{\centernot{i},\centernot{j}})\ket{i}\!\!\bra{j}\text{,}\label{eq_cofactor}
\end{equation}
\noindent where the labeling of the computational basis vectors starts with $i=1$, and where $\det(U_{\centernot{i},\centernot{j}})$ is the $(i,j)$th minor, defined as the determinant of the matrix $U$ with the $i$th row and $j$th column deleted. Formally, we write
\begin{equation}
\det(U_{\centernot{i},\centernot{j}})\!=\frac{(-1)^{i+j}}{(d-1)!}\!\!\sum_{\substack{\tau,{\pi}\in S_d\\\tau(1)=i\\{\pi}(1)=j}}\!\!\!\operatorname{sgn}(\tau)\operatorname{sgn}({\pi})\prod_{k=2}^{d}\!\!\bra{\tau(k)}U\ket{{\pi}(k)}\text{.}\label{eq_minors}
\end{equation}
See \Cref{section_minors} for the proof. Substituting \eqref{eq_minors} into \eqref{eq_cofactor} we get $C(U)= E^\dagger U^{\otimes d-1} E$ with $E=\sum_{j=1}^d\ket{v_j}\!\bra{j}$ and
\begin{equation*}
\ket{v_j}=\frac{1}{\sqrt{(d-1)!}}\sum_{\substack{{\pi}\in S_d\\{\pi}(1)=j}}\operatorname{sgn}({\pi})\ket{{\pi}(2)}\otimes\dots\otimes\ket{{\pi}(d)}\text{.}
\end{equation*}
\noindent $E:\mathcal{H}\to \mathcal{H}^{\otimes d-1}$ is an isometry, because the vectors $\{\ket{v_j}\}_{j\in[d]}$ in $\mathcal{H}^{\otimes d-1}$ are orthonormal. We switch back to the basis-labeling that starts with zero. Since $\{\ket{v_j}\}_{j\in\{0, 1,\dots d-1\}}$ are orthonormal, there exists a (non-unique) unitary $V_\text{conj}$  such that $V_\text{conj}\big(\ket{j}\otimes\ket{0}^{\otimes d-2}\big)=\ket{v_j}=E\ket{j}$, which implies that
\begin{equation}\label{eq_conj_alg}
    \left( I\otimes\bra{0}^{\otimes d-2}\right)V_\text{conj}^\dagger U^{\otimes d-1} V_\text{conj}\left(I\otimes\ket{0}^{\otimes d-2}\right)= C(U)
\end{equation}
\noindent is a unitary. Therefore, the dashed box $A_\text{conj}(U)=V_\text{conj}^\dagger U^{\otimes d-1} V_\text{conj}^\dagger$ in Fig. \ref{fig_Udagger} on input $\ket{\xi}\otimes\ket{0}^{\otimes d-2}$ outputs the all-zero state on the ancilla with probability
\begin{eqnarray*}
p&=&\left|\left|\left( I\otimes\bra{0}^{\otimes d-2}\right)A_\text{conj}(U)\left(\ket{\xi}\otimes\ket{0}^{\otimes d-2}\right)\right|\right|^2\\
&=&||C(U)\ket{\xi}||^2=1\text{.}
\end{eqnarray*}
\noindent In other words, projecting the ancilla output onto the all-zero state has no effect, i.e. $A_\text{conj}(U)( I\otimes\ket{0}\!\!\bra{0}^{\otimes d-2})$ equals
\begin{equation*}
   \left( I\otimes\ket{0}\!\!\bra{0}^{\otimes d-2}\right)A_\text{conj}(U)\left( I\otimes\ket{0}\!\!\bra{0}^{\otimes d-2}\right)\text{,}
\end{equation*}  
which by eq. \eqref{eq_conj_alg} equals to $ C(U)\otimes\ket{0}\!\!\bra{0}^{\otimes d-2}$.
\end{proof}

\subsection{Review of the simplest transpose and inverse algorithms}\label{section_transp_inv}

We listed the transpose and inverse task in \Cref{table_tasks2}.
Quintino, Dong, Shimbo, Soeda and Murao \cite{quintino_inverse, quintino_transforming} presented a family of algorithms for each task. Here we review only the simplest algorithm of each family; the one with the fewest queries and the lowest probability of success.

\begin{myconstr}[due to  \cite{quintino_inverse, quintino_transforming}]\label{thm_transpose}
Figure \ref{fig_Udagger} with the dashed box replaced by $U$ is a single-query algorithm implementing $U^T$ with the probability of success $1/d^2$. 
\end{myconstr}

\noindent Prepending to \Cref{thm_transpose} a unitary that prepares the input $\ket{\psi^+}=\tfrac{1}{\sqrt{d}}\sum_{i=0}^{d-1}\ket{ii}$ (the generalized Bell state) gives a postselection oracle algorithm exactly achieving the task $(U\mapsto U^T,\{id\})$. 

\begin{proof}
Note that if the dashed box is replaced by the identity operator $I$, Fig. \ref{fig_Udagger} corresponds to the generalized teleportation, which succeeds with probability $\frac{1}{d^2}$. In other words $\left(\bra{\psi^+}\otimes I\right)\left( I\otimes\ket{\psi^+}\right)=\frac{1}{d} I$. Next, for any complex $d\times d$ matrix $M$,
$\left(M\otimes I\right)\ket{\psi^+}= I\otimes M^{\,T} \ket{\psi^+}$,
which can be verified by writing $M$ in terms of its matrix elements. Thus, the algorithm of Fig. \ref{fig_Udagger} with the dashed box replaced by $U$ implements
\begin{eqnarray*}
    &\left(\bra{\psi^+}\otimes I\right)\left( I\otimes U\otimes I\right)\left( I\otimes\ket{\psi^+}\right)&\\
    &=U^T\left(\bra{\psi^+}\otimes I\right)\left( I\otimes\ket{\psi^+}\right)=\frac{1}{d}U^T&\text{.}\qedhere
\end{eqnarray*}
\end{proof}

Composing the transpose algorithm with the unitary algorithm for the complex conjugate \cite{compl_conj} gives a postselection oracle algorithm exactly achieving the task $(U\mapsto U^\dagger,\{id\})$.

\begin{myconstr}[due to  \cite{quintino_inverse, quintino_transforming}]
The algorithm of Fig. \ref{fig_Udagger} makes $d-1$ queries to $U$ and implements $U^\dagger$ with the probability of success $1/d^2$. 
\end{myconstr}

Both algorithms above have a useful feature: their effect on the ancillae can easily be uncomputed. Let $V_\text{prep}$ be a unitary that prepares $\ket{\psi^+}$ from the all-zero state. Substituting $\ket{\psi^+}=V_\text{prep}\ket{0}^{\otimes 2}$ into the algorithm of \Cref{fig_Udagger} for both occurrences of $\ket{\psi^+}$ gives algorithms with ancillae initialized to zeros and with $\Pi_\text{succ}=\ket{0}\!\!\bra{0}^{\otimes 2}$. Adding after the dashed box a gate that swaps the first and the last register results in transpose and inverse algorithms $A_T$ and $A_\dagger$ such that
\begin{eqnarray*}
    A_T(U)( I\otimes \ket{0}\!\!\bra{0}^{\otimes 2})&=&U^T\otimes \frac{1}{d}\ket{0}\!\!\bra{0}^{\otimes 2}\\
    A_\dagger(U)( I\otimes \ket{0}\!\!\bra{0}^{\otimes d})&=&U^\dagger\otimes \frac{1}{d}\ket{0}\!\!\bra{0}^{\otimes d}\text{.}
\end{eqnarray*}
We call such algorithms \emph{clean} (\Cref{def_clean}). Note the complex-conjugation algorithm, \cref{eq_complconj}, is also clean. \emph{Cleanness} can be achieved whenever the ancilla output is independent of the oracle $U$, by adding the appropriate fixed unitary gate as we did here.

\subsection{Query complexity of the clean transpose and inverse}\label{section_inv}

We reviewed the simplest transpose and inverse algorithms. References \cite{quintino_inverse, quintino_transforming} also presented other algorithms, with higher probabilities of success, that require more queries. Are they all clean? We answer this negatively in a much more general way: In this section we show that \emph{clean} transpose and inversion algorithms with certain numbers of queries are impossible. First we define clean algorithms: Apart from achieving their task (inequality \eqref{eq_eps-achieve}), they satisfy an additional condition regarding their effect on their ancillae -- if the ancilla input is the all-zero state, so should be the ancilla output. In the definition we merge the task and the ancilla conditions into one inequality. 

\begin{mydef}\label{def_clean}[Clean postselection oracle algorithm]
A postselection oracle algorithm $A:U(d)\to L(\mathcal{H}\otimes\mathcal{K})$ that $\epsilon$-approximately achieves the task $(t,\Sigma)$ is \emph{clean} if for all $\mathcal{H'}$ and  $\rho\in\mathcal{D}(\mathcal{H'}\otimes \mathcal{H})$, $\epsilon$ upper bounds also
\begin{equation*}
\left|\left|\frac{\widetilde{A(U)}\left(\rho\otimes\ket{\mathbf{0}}\!\!\bra{\mathbf{0}}\right)\widetilde{A(U)}{}^\dagger}{\operatorname{tr}\left[\widetilde{A(U)}\left(\rho\otimes\ket{\mathbf{0}}\!\!\bra{\mathbf{0}}\right)\widetilde{A(U)}{}^\dagger\right]}-\widetilde{t(U)}\rho \,\widetilde{t(U)}{}^\dagger\otimes\ket{\mathbf{0}}\!\!\bra{\mathbf{0}} \right|\right|_{\operatorname{tr}}\text{,}
\end{equation*}
where the wide tilde denotes the extension $\!\widetilde{\,X\;\,}\!\!:=I_{\mathcal{H}'}\otimes X$.
\end{mydef}

\noindent In other words, the ancilla output is independent of the unknown oracle. \emph{Clean} algorithms enable ancillae recycling and preserve coherences if used inside an interferometer (Fig. \ref{fig_mach-zehnder}). The following impossibility of certain clean algorithms is a corollary of \Cref{thm_neutr}.

\begin{mycor} Let $\epsilon\in [0, 1)$. If an $N$-query clean postselection oracle algorithm $\epsilon$-approximately achieves the task\label{thm_clean_inv_transp}
\setlength\topsep{0pt}
\begin{enumerate}[topsep=0pt,align=left, leftmargin=5pt] 
\setlength\parskip{0pt}
\setlength\itemsep{1pt}
    \item $(inv, \{id\})$, then $N \equiv -1 \pmod{d}$\label{item_inv}.
    \item $(U\mapsto U^T, \{id\})$, then $N \equiv 1 \pmod{d}$.\label{item_transp}
\end{enumerate}
\end{mycor}

\begin{proof}[Proof of \ref{item_inv}] 
Such an algorithm $(A, \{id\})$ satisfies
\begin{equation}
\left|\left|\frac{A(U)\ket{\mathbf{0}}\!\!\bra{\mathbf{0}}A(U)^\dagger}{\operatorname{tr}\left[A(U)\ket{\mathbf{0}}\!\!\bra{\mathbf{0}}A(U)^\dagger\right]}-(U^\dagger\otimes I)\ket{\mathbf{0}}\!\!\bra{\mathbf{0}}(U\otimes I)\right|\right|_{\operatorname{tr}}\leq\epsilon\label{eq_invclean}
\end{equation}
\noindent by \Cref{def_clean} with $\mathcal{H'}$ trivial and $\rho=\ket{0}\!\!\bra{0}$. If it makes $N$ queries, the new algorithm $({U\mapsto (U\otimes I)A(U)}, \{id\})$ makes $N+1$ queries. Represent by $||X||_{\operatorname{tr}}$ the left hand side of \eqref{eq_invclean}. Since $||(U\otimes I)X(U^\dagger\otimes I)||_{\operatorname{tr}}= ||X||_{\operatorname{tr}}$, the new algorithm $\epsilon$-approximately neutralizes the query sequence $\mathbf{s}=(id)^{N+1}$. \Cref{thm_neutr} then guarantees that $d$ divides $N+1$.\phantom{\qedhere}
\end{proof}

\begin{proof}[Proof of \ref{item_transp}]
Suppose some clean $N$-query algorithm $\epsilon$-approximately achieves the transpose task. Replace each query to $U$ by $A_\text{conj}(U)$ of \cref{eq_complconj}, i.e. by the clean, unitary, $(d-1)$-query algorithm of \cite{compl_conj} that exactly implements $\overline{U}$. The new algorithm makes $N(d-1)$ queries to $U$, $\epsilon$-approximately implements $U^\dagger$, and is clean. Then by the preceding proof $d$ divides $N(d-1)+1$, so $N \equiv 1 \pmod{d}$.
\end{proof}

References  \cite{quintino_inverse, quintino_transforming} found for every $k\in \mathbb{N}$ a $k$-query transpose algorithm and a $k(d-1)$-query inverse algorithm. The success probability of these grows with $k$. We have shown that their algorithms can be clean only if $k \equiv 1 \pmod{d}$.

\section{Discussion and Open Problems}\label{section_discussion}
In this work we introduce a unified framework for algorithms with unitary oracles $U\in U(d)$, and develop a topological method to prove their limitations. The method yields new limitations for algorithms implementing the if clause $c_\phi(U)$, neutralization, $1/d$th power $U^\frac{1}{d}$, transpose $U^T$ and inverse $U^\dagger$. Our method adds to the very few previously known proof methods for limiting query complexity.

\begin{figure*}
     \subfloat[]{\begin{minipage}[b]{0.4\textwidth}
         \includegraphics[valign=c]{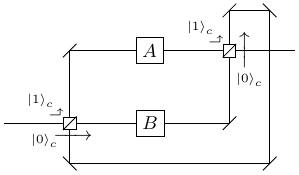}
         \label{fig_LO_QS}
        \end{minipage}
     }
     \subfloat[]{
     \begin{minipage}[b]{0.47\textwidth}
     \includegraphics[valign=c]{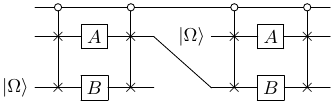}
     \vphantom{\includegraphics[valign=c]{fig8a_lin-opt_qswitch-cropped.pdf}}
         \label{fig_circuit_QS}
         \end{minipage}
     }
        \caption{(a) In this quantum switch implementation \protect{\cite{araujo2014computational}}, the photon passes through the interferometer twice. In the first pass, $A$ acts on one photon if the polarization is $\left|1\right>_c$ and on vacuum otherwise. In the second pass, $A$ acts on one photon if the polarization is $\left|0\right>_c$ and on vacuum otherwise. $A$ can act nontrivially on vacuum. (b) Analogously to \protect{\Cref{fig:two graphs}}, the equivalent circuit captures this possibility.
}
\end{figure*}

Similarly to the preceding methods \cite{lbounds_polynomials,bennett1997strengths,ambainis2000quantum,hoyer2007negative}, our method is worst-case: It expects algorithms to work on \emph{every} input. The distinction is important. While a quantum circuit implementing $U^\frac{1}{d}$ for most $U\in U(d)$ exists \cite{sheridan_maslov_mosca}, one working for all $U\in U(d)$ is impossible. Similarly, the worst-case if clause is impossible (\Cref{thm_main}), and the (exponential) average-case if clause is implemented by Fig. \ref{fig_cU_approx} after substituting the $U^\frac{1}{d}$ heuristic scheme \cite{sheridan_maslov_mosca} for $A(U)$. Whether a more efficient average-case algorithm exists remains open. Between the finite average-case and the infinite worst-case complexity, other complexity notions fit -- and could be explored. One notion is that of Levin's average-case complexity \cite{levin1986average}, equivalent to the notion of \emph{errorless} heuristic scheme \cite{bogdanov2006average} -- the scheme gives a warning (e.g. the $\perp$ symbol) if its output breaks the requested specifications (in our case, if the implemented operator exceeds the requested diamond distance from the task operator). Errorless heuristic schemes remain a possibility for the if clause and the fractional power. Another notion is that of smoothed complexity \cite{spielman2009smoothed}, which considers the worst input, but after perturbing it slightly. Though previously defined for errorless heuristic schemes \cite{spielman2009smoothed,blaser2015smoothed}, smoothed complexity can also be defined for heuristic schemes with errors (as ref. \cite{blum2002smoothed} did specifically for the perceptron algorithm). Such smoothed analysis could resolve the transition between the two problems' average-case and worst-case complexities. Focusing here on worst-case query complexity, we leave all the above questions open.

The query complexity bounds proven here reveal surprising differences between oracle computational models. First, we show how quantum circuits differ from linear optics. By \Cref{thm_main}, the if clause is an oracle problem on which quantum circuits perform much worse than linear optics. The query complexities of the if clause in these models are not polynomially, and not even exponentially equivalent. While the linear-optics complexity of the if clause was known to equal one, its quantum-circuit and process-matrix complexities are worked out here to be infinite! Linear optics avoid our impossibility because the direct sum $U\oplus I$ lacks homogeneity, a property required by our method. According to ref. \cite{araujo_cU}, the direct sum appears due to the gate's localization to the upper path-modes in Fig. \ref{fig_mach-zehnder}. However, instead of a direct sum, a second-quantization view yields a \emph{tensor product} of modes. We suggest a conceptually different reason for the direct sum: the \emph{linearity} of linear optics. Figure \ref{fig_mach-zehnder} implicitly assumes that $U$ is a linear (passive) gate: the vacuum $\ket{\Omega}$ is its $+1$ eigenstate. If this hidden assumption is made explicit, the oracle is not completely unknown and should be represented by $U'=1\oplus U$ as in Fig. \ref{fig_kitaev_vacuum}. 
Thus, while restricting to linear optics yields a weaker computational model, in the presence of oracles (\Cref{section_linear_optics}), it also imposes additional knowledge about the oracles. Surprisingly, overall this gives a remarkable advantage on the if clause.

The demonstrated advantage of linear optics motivates studying restricted computational models -- especially if the restriction makes the model more implementable in the near term. For example, besides linear optics, we could restrict optics to Gaussian optics or to various levels of non-Gaussianity \cite{chabaud2022holomorphic} -- applying the same restriction to the oracle. This could capture varying amounts of device-independence in experiments, adversarial power in cryptography, or distrust in communication protocols. Restricting the computational model -- and with it the oracle -- may change the query complexity of problems other than the if clause. Studying these changes under restrictions that are relevant in near-term computation is an interesting future direction. 

Process matrices compare equally unfavorably to linear optics on the if clause. While the linear restriction improves efficiency, relaxed causality does not.
This affects the debate of whether the linear optics experiments \cite{procopio2015experimental,goswami2018indefinite,rubino2017experimental} implement the process-matrix solution to the quantum switch: Any claimed linear-optics efficiency could stem from the implicit restriction on the oracles, and not from relaxed causality. To discount the restriction effect, we suggest to check the query complexity in the experiments by imagining that the oracles are fully general and act nontrivially on vacuum. For example, for the specific linear-optics quantum switch in \Cref{fig_LO_QS}, this leads to an equivalent circuit in \Cref{fig_circuit_QS} whose query complexity does not match the process-matrix quantum switch. This approach to query complexity has a specific, causality-related purpose. It might not work for other purposes. Indeed, in computational models that allow changing or superposing particle numbers, calculating complexity is an interesting and difficult open problem.

The last query complexity gap is between measurements. Understanding why process tomography fails at the if clause (it fails at the $(1/d)$th power for a similar reason) lead us to define modified tasks which are achievable: The \emph{entangled} and \emph{random if clause}.  If simpler implementations exist, the entangled if clause could serve as a subroutine in larger algorithms. The random if clause reveals the complexity gap between measurements: its query complexity is infinite in the one-success-outcome model, but finite (at most exponential) in a model with $\exp(n)$ success outcomes (\Cref{table_det-random}). Are there algorithms requiring fewer queries and fewer outcomes? An optimal algorithm will exactly quantify the advantage of many outcomes -- finding one remains an open question.

\begin{acknowledgments}
Z.G. and Y.T. thank Dorit Aharonov for the supervision and support. All authors thank Amitay Kamber for sketching the more elementary proof of \Cref{thm_top}, Mateus Ara\'ujo for enlightening discussions of the tomography issue, Alon Dotan, Itai Leigh, Giulio Gasbarri and Michalis Skotiniotis for useful feedback, and an anonymous referee from the QIP conference for suggesting an alternative approach via $SU(d)$ covering spaces (\Cref{section_cover_spaces} is our take on this approach). This work was supported by the Simons Foundation (Grant No. 385590), by the Israel Science Foundation (Grants No. 2137/19 and No. 1721/17), by the European Commission QuantERA grant 
  ExTRaQT (Spanish MICINN project No. PCI2022-132965), by the Ministry for Digital Transformation and of Civil Service of the Spanish Government through the QUANTUM ENIA project call, Quantum Spain project, by the European Union through the Recovery, Transformation and Resilience Plan - NextGenerationEU within the framework of the Digital Spain 2026 Agenda, by Spanish Agencia Estatal de Investigación (Project No. PID2022-141283NB-I00).
\end{acknowledgments}

\onecolumngrid
\appendix

\section{If-clause impossibility (special case) via the Borsuk-Ulam theorem}\label{section_bu}

Here we prove a special case of Theorem \ref{thm_main} from the famous result of algebraic topology: the Borsuk-Ulam theorem \cite{borsuk1933drei} (see Fig. \ref{fig_tides}).

\begin{butheorem}\label{thm_bu}
If $f:S^n\to \mathbb{R}^n$ is a continuous function, then there exists $x\in S^n$ such that $f(x)=f(-x)$. 
\end{butheorem}

\begin{proof}[Proof of Theorem \ref{thm_main} ($d$ even, $m$ odd)]
Let $d$ be an even and $m$ an odd integer. The first part is the same as in the general proof of \Cref{thm_main}. Assume towards contradiction that there exists a continuous $0$-homogeneous function $\mathcal{A}:U(d)\to \mathcal{S}_+(\mathbb{C}^{2d})$ such that 
\begin{equation}
    \left|\left|\frac{\mathcal{A}(U)(\rho)}{\operatorname{tr}\left[\mathcal{A}(U)(\rho)\right]}-\left(\left|0\right>\!\!\left<0\right|\otimes I + e^{i\phi(U)}\left|1\right>\!\!\left<1\right|\otimes U^m\right)\rho\, \left(\left|0\right>\!\!\left<0\right|\otimes I + e^{i\phi(U)}\left|1\right>\!\!\left<1\right|\otimes U^m\right)^\dagger\right|\right|<\frac{1}{2}\label{eq_bu_proof}
\end{equation}
holds for all $U\in U(d)$, all $\rho\in\mathcal{D}(\mathbb{C}^{2d})$ and for some real function $\phi$. Let $\rho_+=\ket{+}\!\!\bra{+}\otimes\left|0\right>\!\!\left<0\right|\in \mathbb{C}^2\otimes\mathbb{C}^d$, and define $f:U(d)\to \mathbb{C}$ by
\begin{equation}
    f(U):=\operatorname{tr}\left[\frac{\mathcal{A}(U)(\rho_+)}{\operatorname{tr}\left[\mathcal{A}(U)(\rho_+)\right]}\left(\left|1\right>\!\!\left<0\right|_c\otimes U^m\right)\right]\text{.}\label{def_f}
\end{equation}
This is well defined because $\mathcal{A}$ maps into $\mathcal{S}_+(\mathbb{C}^{2d})$ so the superoperator $\mathcal{A}(U)$ never vanishes the trace of its input. Observe that by its definition $f$ is a continuous odd function. Substituting  the inequality \eqref{eq_bu_proof} into \eqref{def_f}, we get that $|f(U)-\frac{1}{2}e^{-i\phi(U)}|<\frac{1}{2}$ so $f$ never maps to zero. We can define $\hat{f}(U)=f(U)/\sqrt{\overline{f(U)}f(U)}$, which is also continuous and odd, but maps into the circle $S^1$.

Now, since $d$ is even we can define the following continuous function $g:S^3\to U(d)$ such that given a vector $\mathbf{x}=(x_1, x_2, x_3, x_4)^T$ on the $3$-sphere $S^3\subset \mathbb{R}^4$
\begin{align*}
    g(\mathbf{x})&=\begin{pmatrix}
    x_1+ix_2&   & &  &   &-x_3+ix_4 \\
    & \ddots  &  &  &  \iddots & \\
    &   &x_1+ix_2 & -x_3+ix_4  &   & \\
    &   & x_3+ix_4  &x_1-ix_2& & \\
    & \iddots &   &   &\ddots & \\
    x_3+ix_4&   &   &   &   &x_1-ix_2
    \end{pmatrix}\text{,}
\end{align*}
where the dots denote that the elements are repeated across the diagonal and the antidiagonal, and there are zeroes everywhere else. Note that $g$ is odd, $g(-\mathbf{x})=-g(\mathbf{x})$. By the Borsuk-Ulam theorem, since $\hat{f}\circ g: S^3\to \mathbb{C}\subset  \mathbb{R}^3$ is continuous, there exists $\mathbf{x}\in S^3$ such that $\hat{f}\circ g(\mathbf{x})=\hat{f}\circ g(-\mathbf{x})=-\hat{f}\circ g(\mathbf{x})$, where the last equality holds because $\hat{f}\circ g$ is odd. So for this $\mathbf{x}$ we have $\hat{f}\circ g(\mathbf{x})=0$. But this contradicts the fact that $\hat{f}$ maps into $S^1$.
\end{proof}

\section{Exact if-clause impossibility: an operational proof}\label{section_operational}

\begin{proof}[Proof of \Cref{thm_main} ($\epsilon=0$)] Assume a postselection oracle algorithm $(A, \{id, inv\})$ exactly achieves the task $(c^m_\phi,  \{id, inv\})$. Since the query functions are continuous and $\pm 1$-homogeneous, by \cref{eq_general_alg} $A(U)$ is continuous and $w$-homogeneous for some $w\in\mathbb{Z}$. \Cref{eq_exactly-achieves} implies
\begin{equation}
A(U)\left(I\otimes\left|\mathbf{0}\right>\!\!\left<\boldsymbol{0}\right|\right)=\left(\left|0\right>\!\!\left<0\right|\otimes I+e^{i\phi(U)}\left|1\right>\!\!\left<1\right|\otimes U^m\right)\otimes\left|g(U)\right>\!\!\left<\boldsymbol{0}\right|\text{.}\label{eq_cU_achieved_exactly}
\end{equation}
\begin{figure}
         \includegraphics[]{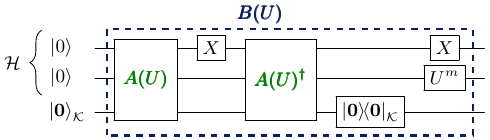}
         \caption{Circuit $B:U(d)\to L(\mathcal{H}\otimes\mathcal{K})$ running $A$ as a subroutine. The control qubit and the target qudit of $\mathcal{H}$ are shown explicitly. $X$ is a Pauli-$X$ gate.}
         \label{fig_proof}
\end{figure}
Consider the circuit of Fig. \ref{fig_proof}. It outputs $B(U)\left|\mathbf{0}\right>=e^{-i\phi(U)}\left|\left|\,\left|g(U)\right>\right|\right|^2\left|\mathbf{0}\right>$. Moreover, $B(U)$ is continuous and $m$-homogeneous. Then
\begin{equation*}
    f(U):=\frac{\left<\boldsymbol{0}\right|B(U)\left|\mathbf{0}\right>}{\left<\boldsymbol{0}\right|A(U)^\dagger A(U)\left|\mathbf{0}\right>}
\end{equation*}
is continuous, $m$-homogeneous and by \cref{condition_post} well-defined on all $U\in U(d)$. Observe that $f(U)=e^{-i\phi(U)}$. \Cref{thm_top} applies to $f$, so $d$ divides $m$. 
\end{proof}

The above can be extended to approximations, as long as the two sides of \eqref{eq_cU_achieved_exactly} are close enough to guarantee $f(U)\neq 0$.

\section{Exact if-clause impossibility via covering spaces}\label{section_cover_spaces}

This proof uses the fact that $PU(d)$ is a $d$th cover of $SU(d)$ \footnote{Suggested by an anonymous referee of the QIP conference, whom we would like to thank in this way.}. The covering space approach gives useful intuition, but, we did not find a way to extend it to the approximate case ($\epsilon>0$) -- we discuss the difficulty at the end of this section.

\begin{proof}[Proof of \Cref{thm_main} ($m=1$, $\epsilon=0$)]
Assume towards contradiction the existence of an algorithm $(\mathcal{A}, \Sigma_A)$ that $\epsilon$-approximately achieves $(c_\phi, \{id, inv\})$ for $\epsilon=0$. Observe from \cref{eq_postsel_superop,eq:process_matrix} that we can define a function on unitary \emph{superoperators} $\mathcal{A}_\mathcal{S}(\mathcal{U}):=\mathcal{A}(U)$, where the unitary superoperator $\mathcal{U}$ is defined as $\mathcal{U}(\rho):=U\rho U^\dagger$. Since the query alphabet $\Sigma_A\subseteq \{id, inv\}$ only contains continuous functions, $\mathcal{A}_\mathcal{S}(\mathcal{U})$ is continuous. Next define
\begin{equation}
    s(\mathcal{U}):=2d\frac{\left(\bra{1}\otimes I\right)\mathcal{A}_\mathcal{S}(\mathcal{U})(\ket{+}\!\!\bra{+}\otimes \tfrac{I}{d})\left(\ket{0}\otimes I\right)}{\operatorname{tr}\left[\mathcal{A}_\mathcal{S}(\mathcal{U})(\ket{+}\!\!\bra{+}\otimes \tfrac{I}{d})\right]}
\end{equation}
and observe that $s$ is continuous and that applying \cref{eq_0-achieve} we get $s(\mathcal{U})=e^{i\phi(U)}U$, which produces a contradiction with the lemma below.
\end{proof}

We will show that there is no continuous function $s$ (called \emph{section}) that maps a unitary superoperator to some matching unitary.

\begin{mylemma}\label{lem:cover1}
Let $\pi: U(d)\to \mathcal{S}_+(\mathbb{C}^d)$ map unitaries to the corresponding superoperators, i.e. $\pi: U\mapsto \mathcal{U}$, where $\mathcal{U}(\rho)=U\rho\, U^\dagger$.
There is no continuous map $s: Im(\pi)\to U(d)$ such that $\pi\circ s = id_{Im(\pi)}$. 
\end{mylemma}

First we will characterize the space $Im(\pi)$. Defining the equivalence relation $U\sim e^{i\alpha} U$ for all $\alpha\in[0, 2\pi)$, we will show that $Im(\pi)$ is homeomorphic to the quotient space $U(d)/\sim\,=U(d)/Z(U(d))=:PU(d)$ called the projective unitary space. In particular, $\pi$ maps into the same element precisely those unitaries that differ by the global phase. To see this, consider the continuous map $\pi$ and the quotient map $p:U(d)\to PU(d)$
\begin{center}
\begin{tikzpicture}[node distance=2cm, auto]
  \node (U) {$U(d)$};
  \node [right of=U] (im) {$Im\left(\pi\right)$};
  \node [node distance=1.2cm, below of=U] (PU) {$PU(d)$};
  \draw[->] (U) to node[above] {$\pi$} (im);
  \draw[->] (U) to node[left] {$p$} (PU);
  \draw[red, ->, text=red] (PU) to node {$h$} (im);
\end{tikzpicture}

\end{center}
and observe that whenever $p(U)=p(U')$ we have that $\pi(U)=\pi(U')$. By the universal property of quotient spaces in topology there exists a continuous map $h: PU(d)\to Im(\pi)$, $h([U])=\mathcal{U}$ such that $h\circ p=\pi$. Therefore $h$ is surjective (because $\pi$ is). $h$ is also injective: assume $h([U])=\mathcal{U}=\mathcal{U}'=h([U'])$ and let $U\in [U]$ be some unitary from the equivalence class. Choose coordinates $k, l\in [d]$ such that the matrix element $u_{lk}$ of $U$ is nonzero. Observing that $\sum_j\mathcal{U}\left(\ket{j}\!\!\bra{k}\right)\ket{l}\!\!\bra{j}=\overline{u}_{lk} U$ and similarly for some $U'\in [U']$, we get $\overline{u}_{lk} U=\overline{u'}_{lk} U'\neq 0$ and since both $U$ and $U'$ are unitary we have $U=e^{i\alpha}U'$ for some $\alpha\in[0, 2\pi)$. Therefore $[U]=[U']$ and $h$ is also injective. A bijective continuous function from a compact space to a Hausdorff space is a homeomorphism, so $Im(\pi)$ and $PU(d)$ are homeomorphic. We can restate \Cref{lem:cover1}:

\begin{mylemma}\label{thm_no-section}
 Let $p:U(d)\to PU(d)$ be the quotient map. There is no continuous map $s: PU(d)\to U(d)$ such that $p \circ s = id$. 
\end{mylemma}

We can prove this via \Cref{thm_top} (see \footnote{\emph{Proof.} $p$ is continuous and $0$-homogeneous, i.e. $p(e^{i\alpha}U)=p(U)$ for all $\alpha$ real. Suppose that $s$ as in \Cref{thm_no-section} exists, then the opposite composition $s\circ p$ must be continuous and $0$-homogeneous (as $p$ loses the information about $\alpha$). Also $(s\circ p)(U)$ differs from $U$ only by the global phase (because $p\circ s=id$). Define $f: U(d)\to S^1$, $f(U)=\bra{\mathbf{0}}U^{-1}(s\circ p)(U)\ket{\mathbf{0}}$. The function $f$ is continuous and $(-1)$-homogeneous so its existence contradicts \Cref{thm_top}. \hfill$\square$}). But our aim here is to prove \Cref{thm_no-section} directly, using the fact that $SU(d)$ is a finite cover of $PSU(d):=SU(d)/Z(SU(d))=SU(d)/\sim$.

\begin{proof}
First we show that $PU(d)$ and $PSU(d)$ are homeomorphic. Let $\imath:SU(d)\to U(d)$ be the embedding of the special unitaries into the unitaries and $sp$ and $p$ the quotient maps as depicted
\begin{center}
\begin{tikzpicture}[node distance=2cm, auto]
  \node (SU) {$SU(d)$};
  \node [right of=SU] (U) {$U(d)$};
  \node [node distance=1.2cm, below of=SU] (PSU) {$PSU(d)$};
  \node [node distance=1.2cm, below of=U] (PU) {$PU(d)$};
  \draw[right hook->] (SU) to node[above] {$\imath$} (U);
  \draw[->] (SU) to node[left] {$sp$} (PSU);
  \draw[->] (U) to node[left] {$p$} (PU);
  \draw[red, ->, text=red] (PSU) to node[above] {$g$} (PU);
\end{tikzpicture}
\end{center}
Observing that whenever $sp(U)=sp(U')$ we must have that $U=\lambda U'$ (with $\lambda^d=1$) and therefore $(p\circ \imath)(U)=(p\circ \imath)(U')$, we get by the universal property of quotient spaces in topology that there exists a continuous $g:PSU(d)\to PU(d)$ that makes the diagram commute. Next define the function $r: U(d)\to SU(d)$, $r(U)=U/\operatorname{det}(U)$. This is a continuous function such that $r\circ \imath=id_{SU(d)}$ (a retraction).
\begin{center}
\begin{tikzpicture}[node distance=2cm, auto]
  \node (SU) {$SU(d)$};
  \node [right of=SU] (U) {$U(d)$};
  \node [node distance=1.2cm, below of=SU] (PSU) {$PSU(d)$};
  \node [node distance=1.2cm, below of=U] (PU) {$PU(d)$};
  \draw[<-] (SU) to node[above] {$r$} (U);
  \draw[->] (SU) to node[left] {$sp$} (PSU);
  \draw[->] (U) to node[left] {$p$} (PU);
  \draw[red, <-, text=red] (PSU) to node[above] {$g'$} (PU);
\end{tikzpicture}
\end{center}
Whenever $p(U)=p(U')$ we have that $U=e^{i\alpha}U'$ for some $\alpha\in[0,2\pi)$ and therefore $(sp\circ r)(U)=(sp\circ r)(U')$. By the universal property of quotient spaces in topology there exists a continuous $g':PU(d)\to PSU(d)$ that makes the diagram commute. The following argument shows that $g$ and $g'$ are the inverse of each other: Take any $\bar{x}\in PSU(d)$ and some preimage of it $x\in SU(d)$, $sp(x)=\bar{x}$. Observe that
\begin{equation*}
    g'\circ g\,(\bar{x})=g'\circ g\circ sp\,(x)=g'\circ p\circ \imath\,(x)=sp\circ r \circ \imath\,(x)=sp(x)=\bar{x}\text{,}
\end{equation*}
where the second equality follows from the first commuting diagram, the third equality from the second commuting diagram and the fourth equality from the fact that $r$ is a retraction. Similarly take any $[U]\in PU(d)$. For any preimage $U\in U(d)$ of $[U]$, $p(U)=[U]$ we have
\begin{equation*}
    g\circ g'\,([U])=g\circ g'\circ p\,(U)=g\circ sp\circ r\,(U)=p\circ \imath \circ r\,(U) = [U]\text{,}
\end{equation*}
where the second equality is from the second commuting diagram, the third equality is from the first commuting diagram, and the last equality holds because $p$ maps $U/\det(U)$ and $U$ to the same equivalence class. We got that the continuous maps $g$ and $g'$ are the inverse of each other, so $PSU(d)$ and $PU(d)$ are homeomorphic. In particular we can find the fundamental group of $PU(d)$ by calculating it for $PSU(d)$.

Next we use the covering space argument to calculate $\pi_1(PSU(d))$, the fundamental group of $PSU(d)$. The group of $d$th roots of unity (same group as $\mathbb{Z}/d\mathbb{Z}$) acts on $SU(d)$ by scalar multiplication: $\lambda.U=\lambda U$. Since the group of $d$th roots of unity is finite, its action is a \emph{covering space action}: for every $A\in SU(d)$ there is a neighborhood $W$ of $A$ such that if for some $d$th root of unity $\lambda$ the neighborhood $\lambda.W$ intersects $W$ then $\lambda=1$. Then the covering space theorem (Proposition 1.40 in  \cite{hatcher2000algebraic}) states that $sp:SU(d)\to PSU(d)$ is a covering map and that the group of $d$th roots of unity $\mathbb{Z}/d\mathbb{Z}$ is isomorphic to $\pi_1(PSU(d))/sp_{*}(\pi_1(SU(d)))$. We know that $sp_{*}(\pi_1(SU(d)))=\{1\}$ because the induced homomorphism $sp_{*}$ must map the trivial group to the trivial group and $\pi_1(SU(d))$ is indeed trivial because $SU(d)$ is simply connected. We get that $\pi_1(PSU(d))$ is isomorphic to $\mathbb{Z}/d\mathbb{Z}$ and so is $\pi_1(PU(d))$.

Now assume towards contradiction that $s:PU(d)\to U(d)$ as in \Cref{thm_no-section} exists. In other words we have the following commuting diagram
\begin{center}
\begin{tikzpicture}[node distance=2cm, auto]
  \node (PU1) {$PU(d)$};
  \node [right of=PU1] (U) {$U(d)$};
  \node [right of=U] (PU2) {$PU(d)$};
  \draw[->] (PU1) to node[above] {$s$} (U);
  \draw[->] (U) to node[above] {$p$} (PU2);
  \draw[->, bend right] (PU1) to node[above] {$id$} (PU2);
\end{tikzpicture}
\end{center}
and knowing that $\pi_1(PU(d))=\mathbb{Z}/d\mathbb{Z}$ and that $\pi_1(U(d))=\mathbb{Z}$, we have the corresponding diagram for the induced homomorphisms between the fundamental groups:
\begin{center}
\begin{tikzpicture}[node distance=2cm, auto]
  \node (PU1) {$\mathbb{Z}/d\mathbb{Z}$};
  \node [right of=PU1] (U) {$\mathbb{Z}$};
  \node [right of=U] (PU2) {$\mathbb{Z}/d\mathbb{Z}$};
  \draw[->] (PU1) to node[above] {$s_*$} (U);
  \draw[->] (U) to node[above] {$p_*$} (PU2);
  \draw[->, bend right] (PU1) to node[above] {$id$} (PU2);
\end{tikzpicture}
\end{center}
\vspace{-12pt}
Since $s_*$ is a homomophism, it must map the generator of $\mathbb{Z}/d\mathbb{Z}$ to some $x\in \mathbb{Z}$ such that $\overbrace{x+x+\dots+x}^{d \text{ times}}=0$. Therefore $s_*$ must map the generator -- and consequently any element -- of $\mathbb{Z}/d\mathbb{Z}$ to zero. $s_*$ must be trivial and this contradicts $p_*\circ s_*=id$.
\end{proof}

From the above proofs, extensions to approximations seem nontrivial. We reduced any if-clause algorithm to a map $s$ of \Cref{thm_no-section} that maps an equivalence class $[U]$ to some representative from that class $U_\text{rep}\in [U]$. But if the algorithm is only approximate, then so is $s$ obtained from the reduction. Its output could be outside the input equivalence class, $s([U])\notin [U]$ or $p\circ s\neq id$. To reach contradiction we would need an extension of \Cref{thm_no-section} to approximations $p\circ s\approx id$.

\section{Alternative proof of \Cref{thm_top}}\label{section_top_short}
Central to this proof is the fact that the determinant
map $\det:U\left(d\right)\rightarrow S^{1}$ induces an isomorphism
on the corresponding fundamental groups (Proposition 2.2.6 of  \cite{symplectic}). The required algebraic-topology background, all of which
is covered in Chapter 9 of  \cite{munkres}, is just the basics of fundamental groups (paths,
homotopies, time reparametrization etc.) together with the description
of the fundamental group of the circle $S^{1}$ as infinite cyclic
and generated by the homotopy class of the loop $t\mapsto e^{2\pi it}.$
Throughout the proof, we will use the notation $\pi_{1}\left(X,p\right)$
for the fundamental group of the topological space $X$ based at $p$.
For a path $\gamma:\left[0,1\right]\rightarrow X$, we will denote
by $\left[\gamma\right]$ its homotopy class, and for another path
$\gamma'$ such that $\gamma\left(1\right)=\gamma'\left(0\right)$
we will denote by $\gamma*\gamma'$ their composition. 
\begin{mylemma}
\label{lem: loop is d-th power}Let $\gamma:\left[0,1\right]\rightarrow U\left(d\right)$
be the loop $\gamma\left(t\right)=e^{2\pi it} I$, where $ I\in U\left(d\right)$
is the identity matrix. Then its homotopy class $\left[\gamma\right]$
is a $d$th power in the fundamental group $\pi_{1}\left(U\left(d\right), I\right)$,
i.e. there exists some other loop $\gamma':\left[0,1\right]\rightarrow U\left(d\right)$
based at $ I$ such that $\left[\gamma'\right]^{d}=\left[\gamma\right]$.
\end{mylemma}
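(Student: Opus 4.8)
The plan is to exhibit $\gamma$ explicitly as a $d$-th power by splitting it into ``elementary'' loops. Write $\gamma$ as a pointwise (slot-by-slot) matrix product of $d$ loops, observe that in a topological group the pointwise product of loops based at the identity represents the product of their homotopy classes, and finally show that all $d$ elementary factors represent the same class $[\gamma']$; together these give $[\gamma]=[\gamma']^{d}$.

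Concretely, for $j\in\{1,\dots,d\}$ let $\gamma_{j}\colon[0,1]\to U(d)$ be the loop whose value at $t$ is the diagonal matrix with $e^{2\pi it}$ in the $j$-th slot and $1$ in every other slot; each $\gamma_{j}$ is a loop based at $\id$, and multiplying slotwise gives $\gamma_{1}(t)\gamma_{2}(t)\cdots\gamma_{d}(t)=e^{2\pi it}\id=\gamma(t)$ for all $t$. I would then invoke the standard interchange fact (the Eckmann--Hilton argument, i.e. exactly what shows $\pi_{1}$ of a topological group is abelian): for loops $\alpha,\beta$ based at the identity, the pointwise product $t\mapsto\alpha(t)\beta(t)$ represents $[\alpha][\beta]\in\pi_{1}$. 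The one-line verification is that $(\alpha*c)\cdot(c*\beta)=\alpha*\beta$ on the nose, where $c$ is the constant loop, $*$ is concatenation and $\cdot$ is the pointwise product; since $\alpha*c\simeq\alpha$, $c*\beta\simeq\beta$, and the pointwise product carries a pair of homotopies to a homotopy, one gets $[\alpha*\beta]=[\alpha\cdot\beta]$. Iterating over the $d$ factors, $[\gamma]=[\gamma_{1}][\gamma_{2}]\cdots[\gamma_{d}]$ in $\pi_{1}(U(d),\id)$.

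It remains to see $[\gamma_{j}]=[\gamma_{1}]$ for all $j$. Fix $j$ and let $P\in U(d)$ be the permutation matrix transposing coordinates $1$ and $j$, so that $P\gamma_{1}(t)P^{-1}=\gamma_{j}(t)$ for all $t$. Since $U(d)$ is path-connected, pick a path $s\mapsto P_{s}$ in $U(d)$ with $P_{0}=\id$, $P_{1}=P$. Then $(s,t)\mapsto P_{s}\gamma_{1}(t)P_{s}^{-1}$ is a continuous homotopy from $\gamma_{1}$ to $\gamma_{j}$; it fixes the basepoint because $\gamma_{1}(0)=\gamma_{1}(1)=\id$ forces the value at $t\in\{0,1\}$ to be $P_{s}\id P_{s}^{-1}=\id$ for every $s$. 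Hence all $[\gamma_{j}]$ agree, and therefore $[\gamma]=[\gamma_{1}]^{d}$; taking $\gamma':=\gamma_{1}$ completes the proof.

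The only step that needs genuine care is the interchange law linking the pointwise product of loops to the group operation of $\pi_{1}$; everything else is elementary path-homotopy manipulation. As a remark I would note an alternative route that sidesteps this: the fibration $SU(d)\hookrightarrow U(d)\xrightarrow{\ \det\ }S^{1}$ together with $\pi_{1}(SU(d))=\pi_{0}(SU(d))=0$ shows $\det_{*}\colon\pi_{1}(U(d),\id)\to\pi_{1}(S^{1},1)$ is an isomorphism, and $\det\circ\gamma$ is the loop $t\mapsto e^{2\pi idt}$, which is $d$ times the generator, so $[\gamma]$ is a $d$-th power; I prefer the elementary argument above since it stays within the basic fundamental-group material of Chapter 9 of \cite{munkres}.
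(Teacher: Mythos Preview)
Your proof is correct. Your main argument uses the same elementary diagonal loops $\gamma_j$ and the same conjugation homotopy between them as the paper's second proof, and your alternative via $\det_*$ is exactly the paper's first proof. The one substantive difference is in linking $\gamma$ to the $\gamma_j$: the paper builds an explicit homotopy between $\gamma$ and the \emph{concatenation} $\gamma_1*\gamma_2*\cdots*\gamma_d$ via two paths in the cube $[0,1]^d$, whereas you observe that $\gamma$ equals the \emph{pointwise} product $\gamma_1\cdot\gamma_2\cdots\gamma_d$ on the nose and invoke the Eckmann--Hilton interchange to identify pointwise and concatenation products in $\pi_1$ of a topological group. Your shortcut is a bit cleaner; the paper's cube construction has the advantage of remaining fully self-contained within the basic fundamental-group material it cites, without appealing to the interchange law.
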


We proved \Cref{lem: loop is d-th power} (less formally) in the main text. Here we give another, shorter proof

\begin{proof}[Proof of \Cref{lem: loop is d-th power}] We use the fact (see Proposition 2.2.6 of  \cite{symplectic}) that the determinant
map $\det:U\left(d\right)\rightarrow S^{1}$ induces an isomorphism on the fundamental groups $\det_{*}:\pi_{1}\left(U\left(d\right), I\right)\rightarrow\pi_{1}\left(S^{1},1\right)$.
The fundamental group $\pi_{1}\left(S^{1},1\right)$ is isomorphic
to $\mathbb{Z}$ and generated by the homotopy class of the loop $\nu:\left[0,1\right]\rightarrow S^{1}$
defined by $\nu\left(t\right)=e^{2\pi it}.$ Since $\det\circ\gamma:\left[0,1\right]\rightarrow S^{1}$
is the loop $t\mapsto e^{2\pi idt}$, which winds $d$ times around
the circle, it is homotopic to the composition of $\nu$ with itself
$d$ times (as both loops are equal up to reparametrization of time).
Thus,
\begin{equation}
{\det}_*\left(\left[\gamma\right]\right)=\left[\det\circ\,\gamma\right]=\left[\underbrace{\nu*\nu*...*\nu}_{d\ \text{times}}\right]=\underbrace{\left[\nu\right]\left[\nu\right]\dots\left[\nu\right]}_{d\ \text{times}}=\left[\nu\right]^{d}.\label{eq_dth_power}
\end{equation}
Since $\det_{*}\left(\left[\gamma\right]\right)$ is a $d$th power
in $\pi_{1}\left(S^{1},1\right)$ and $\det_{*}$ is a group isomorphism,
we deduce that $\left[\gamma\right]$ is also a $d$th power (to
see this, apply the inverse of $\det_{*}$ to both sides of
\cref{eq_dth_power}).
\end{proof}

To complete the proof of \Cref{thm_top} we give here the formal version of the arguments in the main text.

\begin{proof}[Proof of \Cref{thm_top}]
Without the loss of generality we may assume that $f\left( I\right)=1$,
since otherwise we may replace $f$ with its rotation $U\mapsto f\left( I\right)^{-1}f\left(U\right)$.
Let $\gamma:\left[0,1\right]\rightarrow U\left(d\right)$ be the loop
$\gamma\left(t\right)=e^{2\pi it} I$. By \Cref{lem: loop is d-th power},
the homotopy class $\left[\gamma\right]$ is a $d$th power in $\pi_{1}\left(U\left(d\right), I\right)$,
and so must map to a $d$th power under any group homomorphism. In
particular, $h_{*}\left(\left[\gamma\right]\right)=\left[f\circ\gamma\right]$
is a $d$th power in $\pi_{1}\left(S^{1},1\right)$, and so we can
write $\left[f\circ\gamma\right]=\left[\gamma'\right]^{d}$ for some
loop $\gamma':\left[0,1\right]\rightarrow S^{1}$. Since the fundamental
group of the circle $\pi_{1}\left(S^{1},1\right)$ is (infinite) cyclic
and generated by the homotopy class of the loop $\nu\left(t\right)=e^{2\pi it}$,
we can write $\left[\gamma'\right]=\left[\nu\right]^{k}$ for some
$k\in\mathbb{Z}$, and so we have $\left[f\circ\gamma\right]=\left[\nu\right]^{kd}$.
On the other hand, the $m$-homogeneity of $f$ allows us to derive
an explicit formula for $f\circ\gamma$:
\[
f\circ\gamma\left(t\right)=f\left(e^{2\pi it} I\right)=\left(e^{2\pi it}\right)^{m} f\left( I\right)=e^{2\pi imt}.
\]
The loop $t\mapsto e^{2\pi imt}$, which winds $m$ times around the
circle, is homotopic to the composition of $\nu$ with itself $m$
times (as both loops are equal up to reparametrization of time) and
so $\left[f\circ\gamma\right]=\left[\nu\right]^{m}.$ Combining the
two expressions we got for $\left[f\circ\gamma\right]$ in terms of
$\left[\nu\right]$, we deduce that $\left[\nu\right]^{kd}=\left[\nu\right]^{m}$.
Since the order of $\left[\nu\right]$ in $\pi_{1}\left(S^{1},1\right)$
is infinite, the two powers must be equal and $m=kd$.
\end{proof}

\section{Process tomography of \Cref{def_tom2} exists}
\label{section_def_tom2_exist}

In this appendix we argue that algorithms of \Cref{def_tom2} exist. For any $U\in U(d)$, they output a classical description of a unitary operator close to $\kappa_r(U)$, where $r$ depends on a measurement outcome, so $r$ is random. In example \eqref{eq:kappa_cases_appendix} $r\in\{0,1,\dots d-1\}$. The classical description allows for building a new circuit that is close to implementing $\ket{0}\!\!\bra{0}\otimes I + \ket{1}\!\!\bra{1}\otimes \kappa_r(U)$. This does not approximately achieve the if clause, and the only reason is the randomness of $r$. Intuitively, it achieves a random relaxation of the if clause, discussed rigorously in \Cref{section_relaxed}.

Known process tomography algorithms achieve
\begin{equation}
    \operatorname{Pr}_{X\sim p_{N,U}(\cdot)}\left[\left|\left|X\cdot X^\dagger-U\cdot U^\dagger\right|\right|_\diamondsuit\leq \epsilon_{\diamondsuit N}\right]\geq 1-\delta_{\diamondsuit N}\text{,}\label{eq:tom_diam}
\end{equation}
where $\lim_{N\to\infty}\epsilon_{\diamondsuit N}=0$ and  $\lim_{N\to\infty}\delta_{\diamondsuit N}=0$ \cite{surawy2022projected}. 

\begin{mytheorem}
Any process tomography algorithm that satisfies \eqref{eq:tom_diam} also satisfies \Cref{def_tom2}.
\end{mytheorem}

\begin{proof}
Suppose that $\left|\left|X\cdot X-U\cdot U^\dagger\right|\right|_\diamondsuit\leq \epsilon_{\diamondsuit N}<1/d$. By Theorem 1.3 of \cite{technical} there exists $\alpha_{U,X}\in\mathbb{C}$,
\begin{equation}
    (1-\epsilon_{\diamondsuit N})^2\leq 1-{\epsilon_{\diamondsuit N}}\leq \left|\alpha_{U,X}\right|^2 \leq 1\text{,}\label{eq:alpha_phase}
\end{equation}
such that
\begin{equation}
   \left|\left|X- \alpha_{U,X}U\right|\right|_{\operatorname{op}}\leq 2\sqrt {\epsilon_{\diamondsuit N}}\text{.}\label{eq:op_from_tech}
\end{equation}

\begin{samepage}
\noindent This implies that the inequality
\begin{align}
    \left|\bra{r}X\ket{0}-\alpha_{U,X}\bra{r}U\ket{0}\right|&\leq 2\sqrt {\epsilon_{\diamondsuit N}}\label{eq:ineq}
\end{align}
holds for any $r\in\{0,1,\dots, d-1\}$.
\end{samepage}

We aim to satisfy \Cref{def_tom2} with the $\kappa_j$ functions and the $r=r(X)$ rule described in \cref{eq:kappa_ex} and below it:
\begin{align}
    \kappa_j(U)&=  
    \begin{cases}
    \frac{\overline{{\langle j |U|0\rangle}}}{|\langle j |U|0\rangle|}U = \frac{|\langle j |U|0\rangle|}{\langle j |U|0\rangle}U & \text{if } \langle j |U|0\rangle\neq 0 \\
    \kappa_{j+1 \operatorname{mod}d}(U) & \text{otherwise}
    \end{cases}\label{eq:kappa_cases_appendix}\\
    r&=r(X)=\min\{j: |\langle j |X|0\rangle|\geq 1/\sqrt{d}\}\text{,}
\end{align}

\noindent which ensures that $|\langle r |X|0\rangle|\geq1/\sqrt{d}$ and because of our supposition also that $\langle r |U|0\rangle\neq 0$, so we can safely divide by each. We will show that the left-hand side of \eqref{eq:op_from_tech} is close to $\left|\left|\kappa_r(X)-\kappa_r(U)\right|\right|_{\operatorname{op}}$, by using the following upper bound:
\begin{align}
    \left|\alpha_{U,X}- \tfrac{|\langle r |U|0\rangle|}{|\langle r |X|0\rangle|}\tfrac{\bra{r}X\ket{0}}{\bra{r}U\ket{0}}\right|
    &\leq 
    \left|\alpha_{U,X}-\tfrac{\alpha_{U,X}}{|\alpha_{U,X}|}\right|
    +
    \left|\tfrac{\alpha_{U,X}}{|\alpha_{U,X}|} - \alpha_{U,X}\tfrac{|\langle r |U|0\rangle|}{|\langle r |X|0\rangle|}\right|
    +
    \left|\alpha_{U,X}\tfrac{|\langle r |U|0\rangle|}{|\langle r |X|0\rangle|} - \tfrac{|\langle r |U|0\rangle|}{|\langle r |X|0\rangle|}\tfrac{\bra{r}X\ket{0}}{\bra{r}U\ket{0}}\right|\nonumber\\
    &= \left||\alpha_{U,X}|-\tfrac{|\alpha_{U,X}|}{|\alpha_{U,X}|}\right|
    +
    |\alpha_{U,X}| \left|\tfrac{1}{|\alpha_{U,X}|} - \tfrac{|\langle r |U|0\rangle|}{|\langle r |X|0\rangle|}\right|
    +
    \tfrac{|\langle r |U|0\rangle|}{|\langle r |X|0\rangle|}\left|\alpha_{U,X} - \tfrac{\bra{r}X\ket{0}}{\bra{r}U\ket{0}}\right|\nonumber\\
    &\leq \left|\alpha_{U,X}-1\right| + |\alpha_{U,X}|\frac{2\sqrt {\epsilon_{\diamondsuit N}}}{|\alpha_{U,X}|\,|\langle r |X|0\rangle|} + \frac{|\langle r |U|0\rangle|}{|\langle r |X|0\rangle|}\frac{2\sqrt {\epsilon_{\diamondsuit N}}}{|\langle r |U|0\rangle|}\label{eq:last_line}\text{,}
\end{align}
\noindent where the last inequality upper bounded: the first term using $|\,|a| - |b|\,|\leq |a-b|$ $a,b\in\mathbb{C}$, the second term using the same applied to \cref{eq:ineq} with $\alpha_{U,X}\langle r |X|0\rangle$ taken out of the norm, the third term using \cref{eq:ineq} with $\langle r |U|0\rangle$ taken out. Applying \eqref{eq:alpha_phase} to the first term of \eqref{eq:last_line} and simplifying the rest, we get
\begin{equation}
    \left|\alpha_{U,X}- \frac{|\langle r |U|0\rangle|}{|\langle r |X|0\rangle|}\frac{\bra{r}X\ket{0}}{\bra{r}U\ket{0}}\right|\leq \epsilon_{\diamondsuit N} + \frac{4\sqrt {\epsilon_{\diamondsuit N}}}{|\langle r |X|0\rangle|}\label{eq:scalars_close}\text{,}
\end{equation}

\noindent We obtain
\begin{align}
\left|\left|\kappa_r(X)-\kappa_r(U)\right|\right|_{\operatorname{op}} 
&=\left|\left|\frac{|\langle r |X|0\rangle|}{\bra{r}X\ket{0}}X-\frac{|\langle r |U|0\rangle|}{\bra{r}U\ket{0}}U\right|\right|_{\operatorname{op}} \nonumber\\
&=\left|\left|X-\frac{|\langle r |U|0\rangle|}{|\langle r |X|0\rangle|}\frac{\bra{r}X\ket{0}}{\bra{r}U\ket{0}}U\right|\right|_{\operatorname{op}}\nonumber\\
&\leq\left|\left|X-\alpha_{U,X}U\right|\right|_{\operatorname{op}} +\left|\left|\alpha_{U,X}U-\frac{|\langle r |U|0\rangle|}{|\langle r |X|0\rangle|}\frac{\bra{r}X\ket{0}}{\bra{r}U\ket{0}}U\right|\right|_{\operatorname{op}}\nonumber\\
&\leq 2\sqrt {\epsilon_{\diamondsuit N}} + \epsilon_{\diamondsuit N} + 4\frac{\sqrt {\epsilon_{\diamondsuit N}}}{|\langle r |X|0\rangle|}\label{eq:op_upper_bounded_r}
\end{align}

\noindent The last inequality upper bounds the first term by \eqref{eq:op_from_tech} and the second term by \eqref{eq:scalars_close}. 

Provided that condition \eqref{eq:tom_diam} holds, we set the parameters of \Cref{def_tom2} in the following way. Let $\delta_N=\delta_{\diamondsuit N}$. Let $N_0$ be some number such that $\epsilon_{\diamondsuit {N}}\leq 1/d$ for all $N\geq N_0$. We set

\begin{equation}
    \epsilon_{N}=
    \begin{cases}
    2 & \text{\quad if } N<N_0\\
    2\sqrt {\epsilon_{\diamondsuit N}} + \epsilon_{\diamondsuit N} + 4\sqrt{d}\sqrt {\epsilon_{\diamondsuit N}} & \text{\quad if } N\geq N_0\text{.}
    \end{cases}
\end{equation}
The $r=r(X)$ rule guarantees that for $N\geq N_0$ $\epsilon_N$ upper bounds \eqref{eq:op_upper_bounded_r}. At the same time $\lim_{N\to\infty}\epsilon_N=0$ and  $\lim_{N\to\infty}\delta_N=0$ hold, satisfying \Cref{def_tom2}.
\end{proof}

\section{Random if clause and entangled if clause}\label{section_relaxed}

Here we show how algorithms of \Cref{def_tom2} yield the random if clause and the entangled if clause. For a rigorous treatment we first need several definitions.

\begin{mydef}[$k$-task]\label{def:k-task}
A \emph{$k$-task} is a pair $(\boldsymbol{t}, \Sigma)$, where
\begin{enumerate}
    \item {$\boldsymbol{t}$ is a vector $(t_0, t_1,\dots t_{k-1})$ of task functions $t_i: U(d)\to L(\mathcal{H}_{\boldsymbol{t}})$}
    \item {The query alphabet $\Sigma$ is a set of query functions.}
\end{enumerate}
\end{mydef}

\noindent For example, the $k$-if clause is $((c_{\phi_0}, c_{\phi_1},\dots,c_{\phi_{k-1}}), \{id, inv\})$ for any $\{\phi_i\}_i$. Another example is the $k$-$q$th power for some fraction $q\in\mathbb{Q}$, where each $t_i$ is a specific $q$th power function on $U\in U(d)$ (recall that many such functions exist).

A random task or an entangled task is a $k$-task for any $k\geq 2$. The qualifiers “random” and “entangled” will invoke different notions of algorithms “achieving” the $k$-task. We defined postselection oracle algorithm (\Cref{def_postsel_alg}), which includes only a binary success/fail measurement. Here we need a generalization:

\begin{mydef}[$m$-Postselection oracle algorithm]\label{def_m-postsel_alg} An \emph{$m$-postselection oracle algorithm} $((A_0, A_1, \dots A_{m-1}), \Sigma_A)$ corresponds to an $m$-tuple of functions $A_r:U(d)\to L(\mathcal{H}\otimes\mathcal{K})$ of the form
\begin{equation}\label{eq_general_m-alg}
    A_r(U) = \Pi_r\, V_{N} (\sigma_N(U)\otimes I_{ \mathcal{K}_N}) \dots V_2 (\sigma_2(U)\otimes I_{ \mathcal{K}_2})V_1(\sigma_1(U)\otimes I_{ \mathcal{K}_1})V_0 
\end{equation}
that differ from each other only by the last multiplication by the projector $\Pi_r$. The operator $A_r(U)$ is implemented upon getting the success outcome $r\in\{0, 1,\dots m-1\}$ in the measurement  $\{\Pi_0, \Pi_1, \dots \Pi_{m-1}, \Pi_\text{fail}\}$, where $\Pi_\text{fail}=I-\sum_{r=0}^{m-1}\Pi_r$. As before, $\sigma_i\in\Sigma_A$. Whenever the ancilla Hilbert space $\mathcal{K}$ is initialized to the all-zero state, the total probability of success is nonzero, i.e. for all $U\in U(d)$ and for all $\ket{\xi}\in\mathcal{H}$
\begin{equation}
    \sum_{r=0}^{m-1}||A_r(U)\left(\ket{\xi}\otimes\ket{\mathbf{0}}_ \mathcal{K}\right)||^2> 0\text{.}
    \label{condition_m-post}
\end{equation}
\end{mydef}

\noindent Observe that for $m=1$ we are back to our original postselection oracle algorithm model. The superoperator implemented on the $r$th outcome is
\begin{equation}
    \mathcal{A}_{r\,U}(\rho)=\operatorname{tr}_{\mathcal{K}}\left[A_r(U)\left(\rho\otimes\ket{\mathbf{0}}\!\!\bra{\mathbf{0}}_\mathcal{K}\right)A_r(U)^\dagger\right]\label{eq_superops}
\end{equation}
where we wrote $U$ as a subscript to avoid too many parentheses. If $m\geq2$, the probability $\operatorname{tr}[\mathcal{A}_{r\,U}(\rho)]$ of an individual outcome $r\in\{0, 1,\dots m-1\}$ can hit zero.

Next, we generalize approximately achieving (\Cref{def_supermap_approx_achieving}) to random $k$-tasks. We would like to compare $\mathcal{A}_{r\,U}$ to $t_r(U)$ as in \Cref{def_supermap_approx_achieving}. The $r$th contribution to the error should be weighted by the conditional probability of the $r$th outcome, $p_{r\,U}(\rho)=\operatorname{tr}[\widetilde{\mathcal{A}_{r\,U}}(\rho)]/p_{\text{succ}\,U}(\rho)$, where $\rho\in\mathcal{D}(\mathcal{H}\otimes\mathcal{H}')$, $\widetilde{\mathcal{A}_{r\,U}}=\mathcal{A}_{r\,U}\otimes\mathcal{I}_{\mathcal{H}'}$, and $p_{\text{succ}\,U}(\rho)=\sum_{r=0}^{m-1}\operatorname{tr}[\widetilde{\mathcal{A}_{r\,U}}(\rho)]$ is the total probability of success. On inputs $U,\rho$ that yield a nonzero probability of outcome $r$, the contribution is
\begin{equation}
    p_{r\,U}(\rho)\left|\left|\frac{\widetilde{\mathcal{A}_{r\,U}}(\rho)}{\operatorname{tr}\left[\widetilde{\mathcal{A}_{r\,U}}(\rho)\right]}-t_r(U)\rho \,t_r(U)^\dagger\right|\right|_{\operatorname{tr}} = 
    \frac{1}{p_{\text{succ}\,U}(\rho)}\left|\left|\widetilde{\mathcal{A}_{r\,U}}(\rho)-\operatorname{tr}\left[\widetilde{\mathcal{A}_{r\,U}}(\rho)\right]t_r(U)\rho \,t_r(U)^\dagger\right|\right|_{\operatorname{tr}}\text{,}
\end{equation}
which is well-defined on all $\rho\in\mathcal{D}(\mathcal{H})$ because of condition \eqref{condition_m-post}. This justifies the following definition.

\begin{mydef}[$m$-postselection oracle algorithm $\epsilon$-approximately achieving a random $k$-task]\label{def_achieving_rand}
 \hfill\newline
An $m$-postselection oracle algorithm $((A_0, A_1, \dots A_{m-1}), \Sigma_A)$ {\it $\epsilon$-approxi\-mately achieves} a random $((t_0, t_1,\dots t_{k-1}), \Sigma)$ if $\Sigma_A\subseteq \Sigma$ and if for all $U\in U(d)$
\begin{equation}
    \sup_{\substack{\mathcal{H}'\\[0.3em]\rho\in\mathcal{D}(\mathcal{H}\otimes\mathcal{H}')\hspace{-1.7em}}}\frac{1}{p_{\text{succ}\,U}(\rho)}\sum_{r=0}^{m-1}\left|\left|\widetilde{\mathcal{A}_{r\,U}}(\rho)-\operatorname{tr}\left[\widetilde{\mathcal{A}_{r\,U}}(\rho)\right]\widetilde{t_r(U)}\rho \,\widetilde{t_r(U)}^\dagger\right|\right|_{\operatorname{tr}}\leq \epsilon\text{,} \label{eq_eps-achieve_many} 
\end{equation}
\noindent where the wide tilde denotes the $\mathcal{H}'$ extension by the identity superoperator or operator, and for $r\geq k$ we set $t_r=t_{k-1}$.
\end{mydef}

If either $k=1$ or $m=1$, then \eqref{eq_eps-achieve_many} simplifies to \eqref{eq_eps-achieve}. Consequently, the if clause ($k=1$) is impossible also in the $m>1$ model, and the random if clause is impossible in the original $m=1$ model. This justifies the remaining impossibilities in \Cref{table_det-random}.

Extending the algorithmic model is not necessary for the entangled if clause. Binary success/fail measurement suffices if the $r$ value can remain in an additional, entangled register.  

\begin{mydef}[Postselection oracle algorithm $\epsilon$-approximately achieving an entangled $k$-task]\label{def_achieving_ent}
A postselection oracle algorithm $(\mathcal{A}, \Sigma_A)$ {\it $\epsilon$-approxi\-mately achieves} an entangled $((t_0, t_1,\dots t_{k-1}), \Sigma)$ if $\Sigma_A\subseteq \Sigma$ and if $\mathcal{A}_U:=\mathcal{A}(U)$ maps $L(\mathcal{H})\to L(\mathcal{H}\otimes\mathbb{C}^{k}$) and
\begin{equation}
    \sup_{\substack{\mathcal{H}'\\[0.3em]\rho\in\mathcal{D}(\mathcal{H}\otimes\mathcal{H}')\hspace{-1em}}}\frac{1}{\operatorname{tr}\left[\widetilde{\mathcal{A}_U}(\rho)\right]}\sum_{r=0}^{k-1}\left|\left|\bra{r}\widetilde{\mathcal{A}_U}(\rho)\ket{r}-\operatorname{tr}\left[\bra{r}\widetilde{\mathcal{A}_U}(\rho)\ket{r}\right]\widetilde{t_r(U)}\rho \,\widetilde{t_r(U)}^\dagger\right|\right|_{\operatorname{tr}}\leq \epsilon\text{,}\label{eq:achieving_ent}
\end{equation}
where $\{\ket{r}\}_{r\in [k]}$ is the computational basis of $\mathbb{C}^k$, and the wide tilde denotes the $\mathcal{H}'$ extension by the identity superoperator or operator.
\end{mydef}

\noindent Writing $p_{r\, U}(\rho)=\operatorname{tr}\left[\bra{r}\widetilde{\mathcal{A}_U}(\rho)\ket{r}\right]$ the sum in \eqref{eq:achieving_ent} is equal to
\begin{align}
    \sum_{r=0}^{k-1}&\left|\left|\bra{r}\widetilde{\mathcal{A}_U}(\rho)\ket{r}\otimes \ket{r}\!\!\bra{r}-p_{r\, U}(\rho)\widetilde{t_r(U)}\rho \,\widetilde{t_r(U)}^\dagger\otimes\ket{r}\!\!\bra{r}\right|\right|_{\operatorname{tr}}\nonumber\\
    &=\left|\left|\sum_{r=0}^{k-1}\ket{r}\!\!\bra{r}\widetilde{\mathcal{A}_U}(\rho)\ket{r}\!\!\bra{r}-\sum_{r=0}^{k-1}p_{r\, U}(\rho)\left[\widetilde{t_r(U)}\otimes\ket{r}\right]\rho \left[\widetilde{t_r(U)}^\dagger\otimes\bra{r}\right]\right|\right|_{\operatorname{tr}}\text{.}
\end{align}
Thus, expression \eqref{eq:main_entangled} in the main text is indeed an example of an entangled $k$-task.

The last two definitions are closely related. Applying the principle of deferred measurement to an algorithm of \Cref{def_achieving_rand} will give an algorithm of \Cref{def_achieving_ent} for the entangled version of the same $k$-task. In the opposite direction, adding the measurement with $\Pi_r=I\otimes \ket{r}\!\!\bra{r}$ results in the corresponding $k$-postselection oracle algorithm.

Next we show that the random if clause is possible. By the above argument, the entangled if clause follows. We will show that example \eqref{eq:kappa_cases_appendix} yields an algorithm satisfying \Cref{def_achieving_rand} for $k=m=d$, justifying the tick in \Cref{table_det-random}. Process tomography can obtain classical descriptions $\kappa_r(X)\in U(d)$ that satisfy \Cref{def_tom2}. From this classical description, we can build a new circuit that implements an operator arbitrarily close to $c_r(X):=\ket{0}\!\!\bra{0}\otimes I + \ket{1}\!\!\bra{1}\otimes \kappa_r(X)$. If we know the value of $r$, then this strategy implements a superoperator arbitrarily close to $\mathcal{A}_{r\,U,N}$, where
\begin{equation}
    \mathcal{A}_{r\,U,N}(\rho)=\sum_{\substack{X\in\Omega_N,\\r(X)=r\\}}p_{U,N}(X)c_r(X)\rho c_r(X)^\dagger\text{.}\label{eq:superope_from_classical}
\end{equation}
Moreover, $p_{\text{succ}\,U}(\rho)=1$. Then the expression on the left-hand side of \eqref{eq_eps-achieve_many} is close to
\begin{align}
    \sup_{\substack{\mathcal{H}'\\[0.3em]\rho\in\mathcal{D}(\mathcal{H}\otimes\mathcal{H}')}}\sum_{r=0}^{d-1}&\left|\left|\widetilde{\mathcal{A}_{r\,U,N}}(\rho)-\operatorname{tr}\left[\widetilde{\mathcal{A}_{r\,U,N}}(\rho)\right]\widetilde{c_r(U)}\rho \,\widetilde{c_r(U)}^\dagger\right|\right|_{\operatorname{tr}}\nonumber\\
    &\leq 
    \sup_{\substack{\mathcal{H}'\\[0.3em]\rho\in\mathcal{D}(\mathcal{H}\otimes\mathcal{H}')}}\sum_{X\in\Omega_N}p_{U,N}(X)\left|\left|\widetilde{c_r(X)}\rho \widetilde{c_r(X)}^\dagger-\operatorname{tr}\left[\widetilde{c_r(X)}\rho \widetilde{c_r(X)}^\dagger\right]\widetilde{c_r(U)}\rho \,\widetilde{c_r(U)}^\dagger\right|\right|_{\operatorname{tr}}\nonumber\\
    &\leq 
    \sum_{X\in\Omega_N}p_{U,N}(X)\left|\left|c_r(X)\cdot c_r(X)^\dagger-c_r(U)\cdot \,c_r(U)^\dagger\right|\right|_\diamondsuit\nonumber\\
    &\leq \sum_{X\in\Omega_N}p_{U,N}(X) 2 \left|\left|\kappa_r(X)-\kappa_r(U)\right|\right|_{\operatorname{op}}\nonumber\\
    &\leq 2\epsilon_{N} + 4\delta_N\text{.}
\end{align}
\noindent In the first inequality we took the sum of \eqref{eq:superope_from_classical} and $p_{U,N}(X)$ out of the trace norm. The second inequality holds because $\operatorname{tr}[\widetilde{c_r(X)}\rho \widetilde{c_r(X)}^\dagger]=1$ and because moving the supremum into the sum can only increase the expression. The third inequality uses the upper bound on the diamond norm in terms of the operator norm \cite[Lemma 12.6]{mixed_states_aharonov98} and the equality $\left|\left|Y\otimes Z\right|\right|_{\operatorname{op}}=\left|\left|Y\right|\right|_{\operatorname{op}}\left|\left|Z\right|\right|_{\operatorname{op}}$. The last inequality follows from \Cref{def_tom2}.

This shows that the described tomography strategy $\epsilon$-approximately achieves the random if clause, for any $\epsilon>0$. In particular, it achieves the random $d$-if clause, where $c_{\phi_r}(U)=\ket{0}\!\!\bra{0}\otimes I + \ket{1}\!\!\bra{1}\otimes \kappa_r(U)$ for $\kappa_r$ of \eqref{eq:kappa_cases_appendix}.

\section{Symmetric equations for the determinant and the minors}\label{section_minors}
In the following we first prove the symmetric formula for the determinant of \cref{eq_det}, then we use it to prove \Cref{lem:minors}, the symmetric formula for the minors of \cref{eq_minors}.

\begin{mylemma}
\label{lem:Symmetric formula for det}The determinant of $M\in \text{Mat}_{n\times n}\left(\mathbb{C}\right)$
is given by the formula:
\[
\det\left(M\right)=\frac{1}{n!}\sum_{{\pi},\tau\in S_{n}}\operatorname{sgn}\left(\tau\right)\operatorname{sgn}\left({\pi}\right)\prod_{i=1}^{n}M_{\tau\left(i\right),{\pi}\left(i\right)}.
\]
\end{mylemma}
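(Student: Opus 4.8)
The plan is to start from the standard Leibniz formula $\det(M)=\sum_{\sigma\in S_n}\sgn(\sigma)\prod_{i=1}^n M_{i,\sigma(i)}$ and show that the claimed double sum collapses, after a change of summation variable, to $n!$ copies of it.

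First I would fix $\tau\in S_n$ and reindex the inner product $\prod_{i=1}^n M_{\tau(i),\pi(i)}$ by substituting $j=\tau(i)$, i.e. $i=\tau^{-1}(j)$; since $i\mapsto\tau(i)$ is a bijection of $\{1,\dots,n\}$, this gives $\prod_{i=1}^n M_{\tau(i),\pi(i)}=\prod_{j=1}^n M_{j,\pi(\tau^{-1}(j))}$. Next I would set $\sigma:=\pi\circ\tau^{-1}$ and use multiplicativity of the sign together with $\sgn(\tau^{-1})=\sgn(\tau)$ to get $\sgn(\tau)\sgn(\pi)=\sgn(\tau)\sgn(\sigma)\sgn(\tau)=\sgn(\sigma)$, because $\sgn(\tau)^2=1$. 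Hence each summand equals $\sgn(\sigma)\prod_{j=1}^n M_{j,\sigma(j)}$.

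Then I would observe that for fixed $\tau$, the map $\pi\mapsto\sigma=\pi\tau^{-1}$ is a bijection of $S_n$ onto itself, so
\[
\sum_{\pi\in S_n}\sgn(\tau)\sgn(\pi)\prod_{i=1}^n M_{\tau(i),\pi(i)}=\sum_{\sigma\in S_n}\sgn(\sigma)\prod_{j=1}^n M_{j,\sigma(j)}=\det(M),
\]
independently of $\tau$. Summing this identity over the $n!$ permutations $\tau\in S_n$ yields $\sum_{\pi,\tau\in S_n}\sgn(\tau)\sgn(\pi)\prod_{i=1}^n M_{\tau(i),\pi(i)}=n!\det(M)$, and dividing by $n!$ gives the claim.

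There is no real obstacle here: the only point requiring a moment's care is bookkeeping the two sign factors and confirming that the reindexing $j=\tau(i)$ and the substitution $\sigma=\pi\tau^{-1}$ are genuine bijections (so that no summands are lost or double-counted). Everything else is a direct manipulation of the Leibniz formula.
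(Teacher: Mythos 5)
Your proof is correct and follows essentially the same structure as the paper's: fix $\tau$, show the inner sum over $\pi$ already equals $\det(M)$, then sum over the $n!$ choices of $\tau$ and divide. The only variation is the mechanism for the inner identity: the paper defines the row-permuted matrix $M^\tau$ and invokes the determinant's alternating property $\det(M^\tau)=\sgn(\tau)\det(M)$, whereas you obtain the same identity by directly reindexing the product ($j=\tau(i)$) and substituting $\sigma=\pi\tau^{-1}$, which keeps the argument entirely at the level of the Leibniz sum without appealing to any external determinant property.
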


\begin{proof}
By the usual Leibniz formula, we have $\det\left(M\right)=\sum_{{\pi}\in S_{n}}\operatorname{sgn}\left({\pi}\right)\prod_{i=1}^{n}M_{i,{\pi}\left(i\right)}.$
For each $\tau\in S_{n}$, set $M^{\tau}$ to be the matrix obtained
from $M$ by permuting the rows of $M$ according to $\tau^{-1}$,
i.e. $M_{i,j}^{\tau}=M_{\tau\left(i\right),j}$. On the one hand, by
the properties of the determinant function we have $\det\left(M^{\tau}\right)=\operatorname{sgn}\left(\tau\right)\det\left(M\right)$,
and on the other hand by the Leibniz formula: 
\[
\det\left(M^{\tau}\right)=\sum_{{\pi}\in S_{n}}\operatorname{sgn}\left({\pi}\right)\prod_{i=1}^{n}M_{i,{\pi}\left(i\right)}^{\tau}=\sum_{{\pi}\in S_{n}}\operatorname{sgn}\left({\pi}\right)\prod_{i=1}^{n}M_{\tau\left(i\right),{\pi}\left(i\right)}.
\]
Combining the two expressions for $\det\left(M^{\tau}\right)$ and multiplying
by $\operatorname{sgn}\left(\tau\right)$ gives 
\[
\det\left(M\right)=\sum_{{\pi}\in S_{n}}\operatorname{sgn}\left({\pi}\right)\operatorname{sgn}\left(\tau\right)\prod_{i=1}^{n}M_{\tau\left(i\right),{\pi}\left(i\right)}.
\]
Summing this equality over all $\tau\in S_{n}$ and dividing by $n!$
gives the desired result.
\end{proof}
\begin{mylemma}\label{lem:minors}
The $\left(i,j\right)$ minor of a matrix $M\in \text{Mat}_{n\times n}\left(\mathbb{C}\right)$
is given by 
\[
\det\left(M_{\centernot{i},\centernot{j}}\right)=\frac{\left(-1\right)^{i+j}}{\left(n-1\right)!}\sum_{\substack{{\pi},\tau\in S_{n}\\
{\pi}\left(1\right)=j\\
\tau\left(1\right)=i
}
}\operatorname{sgn}\left(\tau\right)\operatorname{sgn}\left({\pi}\right)\prod_{k=2}^{n}M_{\tau\left(k\right),{\pi}\left(k\right)}.
\]
\end{mylemma}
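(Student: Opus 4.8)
The plan is to derive this from the symmetric determinant formula of \Cref{lem:Symmetric formula for det} applied to the $(n-1)\times(n-1)$ matrix $M_{\stkout{i},\stkout{j}}$, and then to rewrite the resulting sum over $S_{n-1}$ as a sum over the permutations $\tau,\pi\in S_n$ with $\tau(1)=i$, $\pi(1)=j$ by means of an explicit sign-tracking bijection. First I would fix notation for the deletion: let $\iota_i:\{1,\dots,n-1\}\to[n]\setminus\{i\}$ be the unique order-preserving bijection (so $\iota_i(k)=k$ for $k<i$ and $\iota_i(k)=k+1$ for $k\ge i$), and define $\iota_j$ analogously. By definition of the deleted matrix, $\bigl(M_{\stkout{i},\stkout{j}}\bigr)_{a,b}=M_{\iota_i(a),\iota_j(b)}$, so \Cref{lem:Symmetric formula for det} gives
\[
\det\left(M_{\stkout{i},\stkout{j}}\right)=\frac{1}{(n-1)!}\sum_{\sigma,\rho\in S_{n-1}}\sgn(\sigma)\sgn(\rho)\prod_{k=1}^{n-1}M_{\iota_i(\sigma(k)),\iota_j(\rho(k))}.
\]

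Next I would set up the bijection $\sigma\mapsto\tau$ from $S_{n-1}$ onto $\{\tau\in S_n:\tau(1)=i\}$ defined by $\tau(1)=i$ and $\tau(k+1)=\iota_i(\sigma(k))$ for $1\le k\le n-1$, and likewise $\rho\mapsto\pi$ onto $\{\pi\in S_n:\pi(1)=j\}$ with $\pi(1)=j$, $\pi(k+1)=\iota_j(\rho(k))$. The crux is to compute the signs. Writing $\tau=\tau_0\circ\bar\sigma$, where $\bar\sigma\in S_n$ fixes $1$ and equals $\sigma$ on the remaining coordinates under the shift $k\leftrightarrow k-1$ (hence $\sgn(\bar\sigma)=\sgn(\sigma)$), and $\tau_0\in S_n$ is the map $1\mapsto i$, $k+1\mapsto\iota_i(k)$, one checks directly that $\tau_0$ is the length-$i$ cycle $(1\ i\ i-1\ \cdots\ 2)$, so $\sgn(\tau_0)=(-1)^{i-1}$ and therefore $\sgn(\tau)=(-1)^{i-1}\sgn(\sigma)$; symmetrically $\sgn(\pi)=(-1)^{j-1}\sgn(\rho)$. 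Consequently $\sgn(\sigma)\sgn(\rho)=(-1)^{i+j-2}\sgn(\tau)\sgn(\pi)=(-1)^{i+j}\sgn(\tau)\sgn(\pi)$, while the product transforms as $\prod_{k=1}^{n-1}M_{\iota_i(\sigma(k)),\iota_j(\rho(k))}=\prod_{k=1}^{n-1}M_{\tau(k+1),\pi(k+1)}=\prod_{k=2}^{n}M_{\tau(k),\pi(k)}$. Substituting these two identities into the displayed equation, and noting that $\sigma,\rho$ range over $S_{n-1}$ exactly as $\tau,\pi$ range over the permutations of $[n]$ fixing $i$ resp. $j$ at $1$, produces precisely the claimed formula.

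The main obstacle is the sign bookkeeping: verifying that the order-preserving insertion of the deleted index at position $1$ contributes exactly the factor $(-1)^{i-1}$ (resp. $(-1)^{j-1}$), which is what identifying $\tau_0$ as an $i$-cycle pins down. Once that is settled, the rest is routine re-indexing. I would also remark on the boundary cases $i=1$ or $j=1$, where $\tau_0$ (resp. the analogous cycle) is the identity and the sign factor is $+1$; these are subsumed by the same computation, so no separate argument is needed.
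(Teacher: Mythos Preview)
Your proof is correct and follows essentially the same route as the paper: apply \Cref{lem:Symmetric formula for det} to $M_{\stkout{i},\stkout{j}}$ and then set up a sign-tracking bijection between $S_{n-1}$ and $\{\tau\in S_n:\tau(1)=i\}$ (resp.\ $\pi(1)=j$). The only cosmetic differences are that the paper runs the bijection in the opposite direction (defining $T_m:X_m\to S_{n-1}$ via composition with the cycles $(n\ n-1\ \cdots\ m)$ and $(1\ 2\ \cdots\ n)$, embedding $S_{n-1}\subset S_n$ as permutations fixing $n$) rather than via your order-preserving insertion $\iota_i$, but the sign computation and product identity come out the same way.
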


\begin{proof}
We think of $S_{n-1}$ as the subset of $S_{n}$ of permutations fixing
$n$, i.e. $S_{n-1}=\left\{ \alpha\in S_{n}\mid\alpha\left(n\right)=n\right\} .$
For each $m\in\left\{ i,j\right\} $ set $X_{m}=\left\{ \alpha\in S_{n}\mid\alpha\left(1\right)=m\right\} $,
and let $T_{m}:X_{m}\rightarrow S_{n-1}$ be the mapping $T_{m}\left(\alpha\right)=\left(n\ n-1\ ...\ m\right)\circ\alpha\circ\left(1\ 2\ 3...\ n-1\ n\right)$.
Note that $T_{m}\left(\alpha\right)$ lies in $S_{n-1}$ since it fixes
$n$, that $T_{m}$ is a bijection (since $T_{m}\alpha$ it is obtained
from $\alpha$ by composing on both sides with fixed permutations), and
finally that by the multiplicativity of $\operatorname{sgn}$ we have 
\[
\operatorname{sgn}\left(T_{m}\left(\alpha\right)\right)=\left(-1\right)^{n-m} \operatorname{sgn}\left(\alpha\right)\left(-1\right)^{n-1}=\left(-1\right)^{1-m} \operatorname{sgn}\left(\alpha\right).
\]
One can now verify that for all $k\in\left\{ 1,2,...,n-1\right\} $
and for every $\tau\in X_{i}$ and ${\pi}\in X_{j}$ we have:
\[
\left(M_{\centernot{i},\centernot{j}}\right)_{T_{i}\tau\left(k\right),T_{j}{\pi}\left(k\right)}=M_{\tau\left(k+1\right),{\pi}\left(k+1\right)}.
\]
Indeed, if $\tau\left(k+1\right)<i$ then deleting the $i$th row
from $M$ does not change the index of the $\tau\left(k+1\right)$th
row, and correspondingly $\left(n\ n-1\ ...\ i\right)$ fixes $\tau\left(k+1\right)$
and so $\left(T_{i}\tau\right)\left(k\right)=\tau\left(k+1\right)$.
If $\tau\left(k+1\right)>i$ then then deleting the $i$th row from
$M$ decreases the index of the $\tau\left(k+1\right)$th row by one,
and correspondingly $\left(n\ n-1\ ...\ i\right)$ does the same thing
to $\tau\left(k+1\right)$. A similar argument for ${\pi}$ shows
the validity of the last equality. Now, by \Cref{lem:Symmetric formula for det}:
\begin{eqnarray*}
\det\left(M_{\centernot{i},\centernot{j}}\right) & =&\frac{1}{\left(n-1\right)!}\sum_{{\pi}',\tau'\in S_{n-1}}\operatorname{sgn}\left(\tau'\right)\operatorname{sgn}\left({\pi}'\right)\prod_{k=1}^{n-1}\left(M_{\centernot{i},\centernot{j}}\right)_{\tau'\left(k\right),{\pi}'\left(k\right)}\\
 & =&\frac{1}{\left(n-1\right)!}\sum_{\substack{{\pi}\in X_{i},\tau\in X_{j}}
}\operatorname{sgn}\left(T_{i}\tau\right)\operatorname{sgn}\left(T_{j}{\pi}\right)\prod_{k=1}^{n-1}\left(M_{\centernot{i},\centernot{j}}\right)_{T_{i}\tau\left(k\right),T_{j}{\pi}\left(k\right)}\\
 & =&\frac{1}{\left(n-1\right)!}\sum_{\substack{{\pi}\in X_{i},\tau\in X_{j}}
}\left(-1\right)^{1-i}\operatorname{sgn}\left(\tau\right)\left(-1\right)^{1-j}\operatorname{sgn}\left({\pi}\right)\prod_{k=1}^{n-1}M_{\tau\left(k+1\right),{\pi}\left(k+1\right)}\\
 & =&\frac{\left(-1\right)^{i+j}}{\left(n-1\right)!}\sum_{\substack{{\pi}\in X_{i},\tau\in X_{j}}
}\operatorname{sgn}\left(\tau\right)\operatorname{sgn}\left({\pi}\right)\prod_{k=2}^{n}M_{\tau\left(k\right),{\pi}\left(k\right)}.
\end{eqnarray*}
\end{proof}

\section{Linear-optics solution}\label{section_linear_optics}
First, we briefly review the relevant parts of the linear-optics model as described by Aaronson and Arkhipov \cite{aaronson2011computational}. A $d\times  d$ unitary $U$ specifies what happens to each of the $d$ input modes, collected in the set $S_U$:
\begin{equation*}
    \forall k\in S_U\quad a^{\dagger}_k\mapsto \sum_{j\in S_U} U^{\vphantom{\dagger}}_{jk}a^{\dagger}_j\text{,}
\end{equation*}
where $a^{\dagger}_i$ is the creation operator on the $i$th mode. The unitary acts trivially on other modes, which is captured by the following extension of the matrix
\begin{equation}
    U_{jk}:=\delta_{jk} \quad \text{if $j\notin S_U$ or $k\notin S_U$.}\label{eq_boson_other_modes}
\end{equation} Now for any set of modes $S$ we can describe the action of $U$ by
\begin{equation*}
    \forall k\in S\quad a^{\dagger}_k\mapsto \sum_{j\in S \cup S_U} U^{\vphantom{\dagger}}_{jk}a^{\dagger}_j\text{.}
\end{equation*}
A general quantum state of $n$ particles distributed among $m$ modes is
\begin{equation*}
    \left|\psi\right>_{m, n}=\sum_{\vec{n}=\left(n_1,n_2,\dots n_m\right)\in \Phi_{m, n}}\alpha_{\vec{n}}\frac{\big(a^{\dagger}_1\big)^{n_1}\big(a^{\dagger}_2\big)^{n_2}\dots \left(a^{\dagger}_m\right)^{n_m}}{\sqrt{n_1! n_2! \dots n_m!}}\left|\Omega\right>\text{,}
\end{equation*}
where $\Phi_{m, n}$ is the set of possible partitions of $n$ particles into $m$ slots. In words, a state is a result of taking the vacuum state $\left|\Omega\right>=\left|\Omega\right>_{m, 0}$ and applying to it a degree-$n$ polynomial of $m$ variables -- creation operators. A unitary $U$
 acts on the state as a linear transformation of the polynomial's \emph{variables}. The output is:
 \begin{align}
     U\left[\left|\psi\right>_{|S|, n}\right]=\sum_{\vec{n}=\left(n_1,n_2,\dots n_{|S|}\right)\in \Phi_{|S|, n}}\!\!\!\!\alpha_{\vec{n}}\frac{\left(\sum\limits_{j\in S\cup S_U} \!\!\!\!U^{\vphantom{\dagger}}_{j1}a^{\dagger}_j\right)^{n_1}\left(\sum\limits_{j\in S\cup S_U} \!\!\!\!U^{\vphantom{\dagger}}_{j2}a^{\dagger}_j\right)^{n_2}\!\!\dots \left(\sum\limits_{j\in S\cup S_U} \!\!\!\!U^{\vphantom{\dagger}}_{j|S|}a^{\dagger}_j\right)^{n_{|S|}}}{\sqrt{n_1 !n_2 !\dots n_{|S|} !}} \left|\Omega\right>\label{eq_boson_lifted_unitary}\text{.}
 \end{align}
where $S$ are the input state's modes. In particular, for $k\in S\setminus S_U$, $U$ leaves the $k$th mode of the input state untouched (since then \cref{eq_boson_other_modes} applies). \Cref{eq_boson_lifted_unitary} implies that it also leaves the vacuum state ($n=0$) untouched. These are the two crucial reasons why building an if clause in the linear-optics model is easy.

We define \emph{oracle linear-optics} algorithm to be a sequence of gates $V_0$, $\sigma_1(U)$, $V_1$, $\sigma_2(U), \dots\allowbreak V_{{N}-1}$, $\sigma_{N}(U)$, $V_{N}$; $V_i$ representing fixed gates and $\sigma_i(U)$ query gates as in Fig. 2 in the main text, but the gates now act on a vector state according to \cref{eq_boson_lifted_unitary}.  

Next, we can describe the Mach-Zehnder-interferometer solution (Fig. \ref{fig_mach-zehnder}) of the if clause $(c_\phi, \{id, inv\})$ as a single-query \emph{oracle linear-optics} algorithm. In the quantum-circuit model, a general pure-state input to $c_\phi(U)$ decomposes as $\left|\xi\right>=\alpha_0\left|0\right>\otimes\ket{t_0}+\alpha_1\left|1\right>\otimes\ket{t_1}$, where the first Hilbert space is the control qubit and the second the target qu$d$it. Suppose we encode the control in the photon's polarization. Collect in $S_h$ the modes corresponding to horizontal polarization, $|S_h|=d$, and in $S_v$ the modes corresponding to the vertical, $|S_v|=d$. We write this as the $2d$-mode single-photon state 
\begin{equation*}
    \smash{\big|\psi_{\ket{\xi}}\big>_{2d, 1}=\alpha_0\left|\psi_{\ket{t_0}}\right>_{|S_h|, 1}\left|\Omega\vphantom{_{\ket{t_1}}}\right>_{|S_v|, 0}+\alpha_1\left|\Omega\vphantom{_{\ket{t_1}}}\right>_{|S_h|, 0}\left|\psi_{\ket{t_1}}\right>_{|S_v|, 1}}\text{,}
\end{equation*}
i.e. in the first summand there is one photon on the $d$ horizontal modes and zero photons on the $d$ vertical modes and in the second summand vice versa. In Fig. \ref{fig_mach-zehnder} each polarizing beam splitter corresponds to the unitary $V$ which effectively relabels polarization modes to path modes $h\mapsto \operatorname{lower}$ and $v\mapsto \operatorname{upper}$. Thus, the resulting unitary $V^\dagger U V$ acts nontrivially on the $S_v$ modes only. \Cref{eq_boson_lifted_unitary} implies that 
\begin{align}
    \left(V^\dagger U V\right)\left[\left|\psi_{\left|\xi\right>}\right>_{2d, 1}\right] &=\alpha_0 \left|\psi_{\ket{t_0}}\right>_{|S_h|, 1} U\left[\left|\Omega\vphantom{_{\ket{t_1}}}\right>_{|S_v|, 0}\right]+\alpha_1\left|\Omega\vphantom{_{\ket{t_1}}}\right>_{|S_h|, 0}U\left[\left|\psi_{\ket{t_1}}\right>_{|S_v|, 1}\right]\nonumber\\
    &=\alpha_0\left|\psi_{\ket{t_0}}\right>_{|S_h|, 1}\left|\Omega\vphantom{_{\ket{t_1}}}\right>_{|S_v|, 0}+\alpha_1\left|\Omega\vphantom{_{\ket{t_1}}}\right>_{|S_h|, 0}\left|\psi_{U\ket{t_1}}\right>_{|S_v|, 1}\label{eq_lin}
\end{align}
and the second equality holds because $U$ acts trivially on vacuum, and because $\smash{\left|\psi_{\ket{t_1}}\right>_{|S_v|, 1}}$ has one photon on exactly those modes that are in $S_U$. The output state \eqref{eq_lin} is the linear optics version of
\begin{equation}
    \alpha_0\left|0\right>\otimes\ket{t_0}+\alpha_1\left|1\right>\otimes U\ket{t_1}\text{,}
\end{equation}
which equals $c_\phi(U)\left|\xi\right>$ with $\phi$ identically zero.

\bibliographystyle{unsrt}

\begin{thebibliography}{10}

\bibitem{shor1994algorithms}
P.~W. Shor.
\newblock Algorithms for quantum computation: discrete logarithms and
  factoring.
\newblock In {\em Proceedings 35th annual symposium on foundations of computer
  science}, pages 124--134. IEEE, 1994.

\bibitem{harrow2009quantum}
A.~W. Harrow, A.~Hassidim, and S.~Lloyd.
\newblock Quantum algorithm for linear systems of equations.
\newblock {\em Physical review letters}, 103(15):150502, 2009.

\bibitem{brandao2017quantum}
F.~G. Brandao and K.~M. Svore.
\newblock Quantum speed-ups for solving semidefinite programs.
\newblock In {\em 2017 IEEE 58th Annual Symposium on Foundations of Computer
  Science (FOCS)}, pages 415--426. IEEE, 2017.

\bibitem{van2020quantum}
J.~Van~Apeldoorn, A.~Gily{\'e}n, S.~Gribling, and R.~de~Wolf.
\newblock Quantum {{SDP}}-solvers: Better upper and lower bounds.
\newblock {\em Quantum}, 4:230, 2020.

\bibitem{temme2011quantum}
K.~Temme, T.~J. Osborne, K.~G. Vollbrecht, D.~Poulin, and F.~Verstraete.
\newblock Quantum metropolis sampling.
\newblock {\em Nature}, 471(7336):87--90, 2011.

\bibitem{whitfield2011simulation}
J.~D. Whitfield, J.~Biamonte, and A.~Aspuru-Guzik.
\newblock Simulation of electronic structure hamiltonians using quantum
  computers.
\newblock {\em Molecular Physics}, 109(5):735--750, 2011.

\bibitem{kitaev95}
A.~Y. Kitaev.
\newblock Quantum measurements and the abelian stabilizer problem.
\newblock
  \href{https://arxiv.org/abs/quant-ph/9511026}{arXiv:quant-ph/9511026}.

\bibitem{lbounds_polynomials}
R.~Beals, H.~Buhrman, R.~Cleve, M.~Mosca, and R.~de~Wolf.
\newblock Quantum lower bounds by polynomials.
\newblock {\em Journal of the ACM (JACM)}, 48(4):778--797, 2001.

\bibitem{bennett1997strengths}
C.~H. Bennett, E.~Bernstein, G.~Brassard, and U.~Vazirani.
\newblock Strengths and weaknesses of quantum computing.
\newblock {\em SIAM journal on Computing}, 26(5):1510--1523, 1997.

\bibitem{ambainis2000quantum}
A.~Ambainis.
\newblock Quantum lower bounds by quantum arguments.
\newblock In {\em Proceedings of the thirty-second annual ACM symposium on
  Theory of computing}, pages 636--643, 2000.

\bibitem{hoyer2007negative}
P.~Hoyer, T.~Lee, and R.~Spalek.
\newblock Negative weights make adversaries stronger.
\newblock In {\em Proceedings of the thirty-ninth annual ACM symposium on
  Theory of computing}, pages 526--535, 2007.

\bibitem{grover1996fast}
L.~K. Grover.
\newblock A fast quantum mechanical algorithm for database search.
\newblock In {\em Proceedings of the twenty-eighth annual ACM symposium on
  Theory of computing}, pages 212--219, 1996.

\bibitem{no_cloning}
W.~K. Wootters and W.~H. Zurek.
\newblock A single quantum cannot be cloned.
\newblock {\em Nature}, 299(5886):802--803, 1982.

\bibitem{no_deleting}
A.~K. Pati and S.~L. Braunstein.
\newblock Impossibility of deleting an unknown quantum state.
\newblock {\em Nature}, 404(6774):164--165, 2000.

\bibitem{buzek1999unot}
V.~Bu{\v{z}}ek, M.~Hillery, and R.~F. Werner.
\newblock Optimal manipulations with qubits: Universal-not gate.
\newblock {\em Physical Review A}, 60(4):R2626, 1999.

\bibitem{nielsen1997programmable}
M.~A. Nielsen and I.~L. Chuang.
\newblock Programmable quantum gate arrays.
\newblock {\em Physical Review Letters}, 79(2):321, 1997.

\bibitem{gisin2002quantum}
N.~Gisin, G.~Ribordy, W.~Tittel, and H.~Zbinden.
\newblock Quantum cryptography.
\newblock {\em Reviews of modern physics}, 74(1):145, 2002.

\bibitem{chiribella2010purification}
G.~Chiribella, G.~M. D’Ariano, and P.~Perinotti.
\newblock Probabilistic theories with purification.
\newblock {\em Physical Review A}, 81(6):062348, 2010.

\bibitem{coecke2011categorical}
B.~Coecke and R.~Duncan.
\newblock Interacting quantum observables: categorical algebra and
  diagrammatics.
\newblock {\em New Journal of Physics}, 13(4):043016, 2011.

\bibitem{kent2012minkowski}
A.~Kent.
\newblock Quantum tasks in minkowski space.
\newblock {\em Classical and Quantum Gravity}, 29(22):224013, 2012.

\bibitem{zurek2009darwinism}
W.~H. Zurek.
\newblock Quantum darwinism.
\newblock {\em Nature physics}, 5(3):181--188, 2009.

\bibitem{vitanyi2001kolmogorov}
P.~M. Vit{\'a}nyi.
\newblock Quantum kolmogorov complexity based on classical descriptions.
\newblock {\em IEEE Transactions on Information Theory}, 47(6):2464--2479,
  2001.

\bibitem{buvzek1996quantum}
V.~Bu{\v{z}}ek and M.~Hillery.
\newblock Quantum copying: Beyond the no-cloning theorem.
\newblock {\em Physical Review A}, 54(3):1844, 1996.

\bibitem{buvzek1997quantum}
V.~Bu{\v{z}}ek, S.~L. Braunstein, M.~Hillery, and D.~Bru{\ss}.
\newblock Quantum copying: A network.
\newblock {\em Physical Review A}, 56(5):3446, 1997.

\bibitem{duan1998probabilistic}
L.-M. Duan and G.-C. Guo.
\newblock Probabilistic cloning and identification of linearly independent
  quantum states.
\newblock {\em Physical review letters}, 80(22):4999, 1998.

\bibitem{hillery2006approximate}
M.~Hillery, M.~Ziman, and V.~Bu{\v{z}}ek.
\newblock Approximate programmable quantum processors.
\newblock {\em Physical Review A}, 73(2):022345, 2006.

\bibitem{yang2020optimal}
Y.~Yang, R.~Renner, and G.~Chiribella.
\newblock Optimal universal programming of unitary gates.
\newblock {\em Physical Review Letters}, 125(21):210501, 2020.

\bibitem{vidal2002storing}
G.~Vidal, L.~Masanes, and J.~I. Cirac.
\newblock Storing quantum dynamics in quantum states: A stochastic programmable
  gate.
\newblock {\em Physical review letters}, 88(4):047905, 2002.

\bibitem{sedlak2019optimal}
M.~Sedl{\'a}k, A.~Bisio, and M.~Ziman.
\newblock Optimal probabilistic storage and retrieval of unitary channels.
\newblock {\em Physical review letters}, 122(17):170502, 2019.

\bibitem{ishizaka2009quantum}
S.~Ishizaka and T.~Hiroshima.
\newblock Quantum teleportation scheme by selecting one of multiple output
  ports.
\newblock {\em Physical Review A}, 79(4):042306, 2009.

\bibitem{chefles1999strategies}
A.~Chefles and S.~M. Barnett.
\newblock Strategies and networks for state-dependent quantum cloning.
\newblock {\em Physical Review A}, 60(1):136, 1999.

\bibitem{gisin1997optimal}
N.~Gisin and S.~Massar.
\newblock Optimal quantum cloning machines.
\newblock {\em Physical review letters}, 79(11):2153, 1997.

\bibitem{bruss1998optimal}
D.~Bruss, A.~Ekert, and C.~Macchiavello.
\newblock Optimal universal quantum cloning and state estimation.
\newblock {\em Physical review letters}, 81(12):2598, 1998.

\bibitem{keyl1999optimal}
M.~Keyl and R.~F. Werner.
\newblock Optimal cloning of pure states, testing single clones.
\newblock {\em Journal of Mathematical Physics}, 40(7):3283--3299, 1999.

\bibitem{werner1998optimal}
R.~F. Werner.
\newblock Optimal cloning of pure states.
\newblock {\em Physical Review A}, 58(3):1827, 1998.

\bibitem{araujo_cU}
M.~Ara{\'u}jo, A.~Feix, F.~Costa, and {\v{C}}.~Brukner.
\newblock Quantum circuits cannot control unknown operations.
\newblock {\em New Journal of Physics}, 16(9):093026, 2014.

\bibitem{compl_conj}
J.~Miyazaki, A.~Soeda, and M.~Murao.
\newblock Complex conjugation supermap of unitary quantum maps and its
  universal implementation protocol.
\newblock {\em Physical Review Research}, 1(1):013007, 2019.

\bibitem{quintino_inverse}
M.~T. Quintino, Q.~Dong, A.~Shimbo, A.~Soeda, and M.~Murao.
\newblock Reversing unknown quantum transformations: Universal quantum circuit
  for inverting general unitary operations.
\newblock {\em Physical Review Letters}, 123(21):210502, 2019.

\bibitem{quintino_transforming}
M.~T. Quintino, Q.~Dong, A.~Shimbo, A.~Soeda, and M.~Murao.
\newblock Probabilistic exact universal quantum circuits for transforming
  unitary operations.
\newblock {\em Physical Review A}, 100(6):062339, 2019.

\bibitem{sheridan_maslov_mosca}
L.~Sheridan, D.~Maslov, and M.~Mosca.
\newblock Approximating fractional time quantum evolution.
\newblock {\em Journal of Physics A: Mathematical and Theoretical},
  42(18):185302, 2009.

\bibitem{gilyen2019quantum}
A.~Gily{\'e}n, Y.~Su, G.~H. Low, and N.~Wiebe.
\newblock Quantum singular value transformation and beyond: exponential
  improvements for quantum matrix arithmetics.
\newblock In {\em Proceedings of the 51st Annual ACM SIGACT Symposium on Theory
  of Computing}, pages 193--204, 2019.

\bibitem{bogdanov2006average}
A.~Bogdanov and L.~Trevisan.
\newblock Average-case complexity.
\newblock {\em Foundations and Trends{\textregistered} in Theoretical Computer
  Science}, 2(1):1--106, 2006.

\bibitem{chiribella2013quantum}
G.~Chiribella, G.~M. D’Ariano, P.~Perinotti, and B.~Valiron.
\newblock Quantum computations without definite causal structure.
\newblock {\em Physical Review A}, 88(2):022318, 2013.

\bibitem{oreshkov2012quantum}
O.~Oreshkov, F.~Costa, and {\v{C}}.~Brukner.
\newblock Quantum correlations with no causal order.
\newblock {\em Nature communications}, 3(1):1092, 2012.

\bibitem{oreshkov2019time}
O.~Oreshkov.
\newblock Time-delocalized quantum subsystems and operations: on the existence
  of processes with indefinite causal structure in quantum mechanics.
\newblock {\em Quantum}, 3:206, 2019.

\bibitem{paunkovic2020causal}
N.~Paunkovi{\'c} and M.~Vojinovi{\'c}.
\newblock Causal orders, quantum circuits and spacetime: distinguishing between
  definite and superposed causal orders.
\newblock {\em Quantum}, 4:275, 2020.

\bibitem{procopio2015experimental}
L.~M. Procopio, A.~Moqanaki, M.~Ara{\'u}jo, F.~Costa, I.~Alonso~Calafell, E.~G.
  Dowd, D.~R. Hamel, L.~A. Rozema, {\v{C}}.~Brukner, and P.~Walther.
\newblock Experimental superposition of orders of quantum gates.
\newblock {\em Nature communications}, 6(1):7913, 2015.

\bibitem{goswami2018indefinite}
K.~Goswami, C.~Giarmatzi, M.~Kewming, F.~Costa, C.~Branciard, J.~Romero, and
  A.~G. White.
\newblock Indefinite causal order in a quantum switch.
\newblock {\em Physical review letters}, 121(9):090503, 2018.

\bibitem{rubino2017experimental}
G.~Rubino, L.~A. Rozema, A.~Feix, M.~Ara{\'u}jo, J.~M. Zeuner, L.~M. Procopio,
  {\v{C}}.~Brukner, and P.~Walther.
\newblock Experimental verification of an indefinite causal order.
\newblock {\em Science advances}, 3(3):e1602589, 2017.

\bibitem{barenco1995elementary}
A.~Barenco, C.~H. Bennett, R.~Cleve, D.~P. DiVincenzo, N.~Margolus, P.~Shor,
  T.~Sleator, J.~A. Smolin, and H.~Weinfurter.
\newblock Elementary gates for quantum computation.
\newblock {\em Physical review A}, 52(5):3457, 1995.

\bibitem{nielsen2002quantum}
M.~A. Nielsen and I.~Chuang.
\newblock Quantum computation and quantum information, 2002.

\bibitem{soeda2013limitations}
A.~Soeda.
\newblock Limitations on quantum subroutine designing due to the linear
  structure of quantum operators.
\newblock In {\em The 3rd International Conference on Quantum Information and
  Technology (ICQIT2013)}. NII, Tokyo, 2013.

\bibitem{thompson2018quantum}
J.~Thompson, K.~Modi, V.~Vedral, and M.~Gu.
\newblock Quantum plug n’play: modular computation in the quantum regime.
\newblock {\em New Journal of Physics}, 20(1):013004, 2018.

\bibitem{dong_controlled}
Q.~Dong, S.~Nakayama, A.~Soeda, and M.~Murao.
\newblock Controlled quantum operations and combs, and their applications to
  universal controllization of divisible unitary operations.
\newblock \href{https://arxiv.org/abs/1911.01645}{arXiv:1911.01645}.

\bibitem{aaronsonten}
S.~Aaronson.
\newblock The ten most annoying questions in quantum computing, 2006.
\newblock Retrieved from \url{https://scottaaronson.blog/?p=112}.

\bibitem{spin_echo}
E.~L. Hahn.
\newblock Spin echoes.
\newblock {\em Physical review}, 80(4):580, 1950.

\bibitem{feynman1982simulating}
R.~P. Feynman.
\newblock Simulating physics with computers.
\newblock {\em Int. j. Theor. phys}, 21(6/7), 1982.

\bibitem{abrams1997simulation}
D.~S. Abrams and S.~Lloyd.
\newblock Simulation of many-body fermi systems on a universal quantum
  computer.
\newblock {\em Physical Review Letters}, 79(13):2586, 1997.

\bibitem{bravyi2002fermionic}
S.~B. Bravyi and A.~Y. Kitaev.
\newblock Fermionic quantum computation.
\newblock {\em Annals of Physics}, 298(1):210--226, 2002.

\bibitem{aaronson2011computational}
S.~Aaronson and A.~Arkhipov.
\newblock The computational complexity of linear optics.
\newblock In {\em Proceedings of the forty-third annual ACM symposium on Theory
  of computing}, pages 333--342, 2011.

\bibitem{chuang1997prescription}
I.~L. Chuang and M.~A. Nielsen.
\newblock Prescription for experimental determination of the dynamics of a
  quantum black box.
\newblock {\em Journal of Modern Optics}, 44(11-12):2455--2467, 1997.

\bibitem{poyatos1997complete}
J.~Poyatos, J.~I. Cirac, and P.~Zoller.
\newblock Complete characterization of a quantum process: the two-bit quantum
  gate.
\newblock {\em Physical Review Letters}, 78(2):390, 1997.

\bibitem{leung2000towards}
D.~W.~C. Leung.
\newblock {\em Towards robust quantum computation}.
\newblock PhD thesis, Stanford university, 2000.

\bibitem{oracles_survey}
H.~Buhrman and R.~de~Wolf.
\newblock Complexity measures and decision tree complexity: a survey.
\newblock {\em Theoretical Computer Science}, 288(1):21--43, 2002.

\bibitem{aaronson_postBQP}
S.~Aaronson.
\newblock Quantum computing, postselection, and probabilistic polynomial-time.
\newblock {\em Proceedings of the Royal Society A: Mathematical, Physical and
  Engineering Sciences}, 461(2063):3473--3482, 2005.

\bibitem{baumeler2016space}
{\"A}.~Baumeler and S.~Wolf.
\newblock The space of logically consistent classical processes without causal
  order.
\newblock {\em New Journal of Physics}, 18(1):013036, 2016.

\bibitem{technical}
Z.~Gavorov{\'a}.
\newblock Notes on distinguishability of postselected computations.
\newblock \href{https://arxiv.org/abs/2011.08487}{arXiv:2011.08487}.

\bibitem{borsuk1933drei}
K.~Borsuk.
\newblock Drei s{\"a}tze {\"u}ber die n-dimensionale euklidische sph{\"a}re.
\newblock {\em Fundamenta Mathematicae}, 20(1):177--190, 1933.

\bibitem{surawy2022projected}
T.~Surawy-Stepney, J.~Kahn, R.~Kueng, and M.~Guta.
\newblock Projected least-squares quantum process tomography.
\newblock {\em Quantum}, 6:844, 2022.

\bibitem{mixed_states_aharonov98}
D.~Aharonov, A.~Kitaev, and N.~Nisan.
\newblock Quantum circuits with mixed states.
\newblock In {\em Proceedings of the thirtieth annual ACM symposium on Theory
  of computing}, pages 20--30, 1998.

\bibitem{lloyd1996universal}
S.~Lloyd.
\newblock Universal quantum simulators.
\newblock {\em Science}, 273(5278):1073--1078, 1996.

\bibitem{bacon2006efficient}
D.~Bacon, I.~L. Chuang, and A.~W. Harrow.
\newblock Efficient quantum circuits for schur and clebsch-gordan transforms.
\newblock {\em Physical review letters}, 97(17):170502, 2006.

\bibitem{van2002efficient}
W.~Van~Dam and G.~Seroussi.
\newblock Efficient quantum algorithms for estimating gauss sums.
\newblock {\em arXiv preprint quant-ph/0207131}, 2002.

\bibitem{araujo2014computational}
M.~Ara{\'u}jo, F.~Costa, and {\v{C}}.~Brukner.
\newblock Computational advantage from quantum-controlled ordering of gates.
\newblock {\em Physical review letters}, 113(25):250402, 2014.

\bibitem{levin1986average}
L.~A. Levin.
\newblock Average case complete problems.
\newblock {\em SIAM Journal on Computing}, 15(1):285--286, 1986.

\bibitem{spielman2009smoothed}
D.~A. Spielman and S.-H. Teng.
\newblock Smoothed analysis: an attempt to explain the behavior of algorithms
  in practice.
\newblock {\em Communications of the ACM}, 52(10):76--84, 2009.

\bibitem{blaser2015smoothed}
M.~Bl{\"a}ser and B.~Manthey.
\newblock Smoothed complexity theory.
\newblock {\em ACM Transactions on Computation Theory (TOCT)}, 7(2):1--21,
  2015.

\bibitem{blum2002smoothed}
A.~Blum and J.~Dunagan.
\newblock Smoothed analysis of the perceptron algorithm for linear programming.
\newblock In {\em Proceedings of the thirteenth annual ACM-SIAM symposium on
  Discrete algorithms}, pages 905--914, 2002.

\bibitem{chabaud2022holomorphic}
U.~Chabaud and S.~Mehraban.
\newblock Holomorphic representation of quantum computations.
\newblock {\em Quantum}, 6:831, 2022.

\bibitem{hatcher2000algebraic}
A.~Hatcher.
\newblock {\em Algebraic topology}.
\newblock Cambridge Univ. Press, 2000.

\bibitem{symplectic}
D.~McDuff and D.~Salamon.
\newblock {\em Introduction to symplectic topology}.
\newblock Oxford University Press, 2017.

\bibitem{munkres}
J.~R. Munkres.
\newblock {\em Topology}.
\newblock Prentice Hall, Inc., Upper Saddle River, NJ, 2000.
\newblock Second edition of [ MR0464128].

\end{thebibliography}

\end{document}